\DeclareMathAlphabet{\mathbold}{OML}{cmm}{b}{it}
\long\def\unmarkedfootnote#1{{\long\def\@makefntext##1{##1}\footnotetext{#1}}}
\newcommand{\ZZ}{{\mathbb{Z}}}
\newcommand{\QQ}{{\mathbb{Q}}}
\newcommand{\NN}{{\mathbb{N}}}
\newcommand{\CC}{{\mathbb{C}}}
\renewcommand{\P}{{\mathcal{P}}}
\newcommand{\Q}{{\mathcal Q}}
\newcommand{\M}{{\mathsf{M}}}
\newcommand{\F}{{\mathsf{F}}}
\newcommand{\fp}{f^{(p)}}
\newcommand{\bs}{B_\Sigma}
\newcommand{\br}{B_S}
\newcommand{\bb}{\kappa_f}
\newcommand{\ndivs}{\nmid}
\newcommand{\divs}{{\mskip3mu|\mskip3mu}}
\newcommand{\floor}[1]{{\lfloor #1 \rfloor}}
\newcommand{\pset}{\mathcal{G}}
\DeclareMathOperator{\lcm}{lcm}
\DeclareMathOperator{\nequiv}{\mskip4mu\not\equiv\mskip4mu}
\DeclareMathOperator{\size}{size}
\newcommand{\sz}[1]{\size(#1)}
\DeclareMathOperator{\rem}{rem}
\DeclareMathOperator{\remo}{rem_1}
\DeclareMathOperator{\llog}{loglog}
\DeclareMathOperator{\lllog}{logloglog}
\renewcommand{\th}{{\rm th}}
\title{Interpolation of Shifted-Lacunary Polynomials}
\author{
  Mark Giesbrecht\\
  School of Computer Science\\
  University of Waterloo\\
  Waterloo, ON, N2L 3G1, Canada\\
  \email{mwg@cs.uwaterloo.ca}\\
  \homepage{http://www.cs.uwaterloo.ca/~mwg}
\and
  Daniel S. Roche\\
  School of Computer Science\\
  University of Waterloo\\
  Waterloo, ON, N2L 3G1, Canada\\
  \email{droche@cs.uwaterloo.ca}\\
  \homepage{http://www.cs.uwaterloo.ca/~droche}
}
\begin{abstract}
  Given a ``black box'' function to evaluate an unknown rational
  polynomial $f\in\QQ[x]$ at points modulo a prime $p$, we exhibit
  algorithms to compute the representation of the polynomial in the
  sparsest shifted power basis.  That is, we determine the sparsity
  $t\in\ZZ_{>0}$, the shift $\alpha\in\QQ$, the exponents
  $0\leq e_1<e_2<\cdots <e_t$, and the coefficients
  $c_1,\ldots,c_t\in\QQ\setminus\{0\}$ such that
  \[
  f(x) = c_1(x-\alpha)^{e_1}+c_2(x-\alpha)^{e_2}+\cdots+c_t(x-\alpha)^{e_t}.
  \]
  The computed sparsity $t$ is \emph{absolutely} minimal over any
  shifted power basis.  The novelty of our algorithm is that the
  complexity is polynomial in the (sparse) representation size,
  which may be logarithmic in the degree of $f$.
  Our method combines
  previous celebrated results on sparse interpolation and computing
  sparsest shifts, and provides a way to handle polynomials with
  extremely high degree which are, in some sense, sparse in information.
\end{abstract}
\begin{document}

\maketitle

\section{Introduction}
Interpolating an unknown polynomial from a set of evaluations is a problem
which has interested mathematicians for hundreds of years, and which is now
implemented as a standard function in most computer algebra systems.
To illustrate some different kinds of interpolation problems, consider the
following three representations for a polynomial $f$ of degree $n$:
\begin{eqnarray} 
\label{eqn:denserep}
f(x) &=& a_0 + a_1x + a_2 x^2 + \cdots + a_n x^n, \\
\label{eqn:sparserep}
f(x) &=& b_0 + b_1 x^{d_1} + b_2 x^{d_2} + \cdots + b_s x^{d_s}, \\
\label{eqn:ssrep}
f(x) &=& c_0 + c_1(x-\alpha)^{e_1}+c_2(x-\alpha)^{e_2}+\cdots+c_t(x-\alpha)^{e_t}.
\end{eqnarray}
In \ref{eqn:denserep} we see the dense representation of the
polynomial, where all coefficients (even zeroes) are
represented. Newton and Waring discovered methods to interpolate $f$
in time proportional to the size of this representation in the
$18^\th$ century.  The sparse or lacunary representation is shown in
\ref{eqn:sparserep}, wherein only the terms with non-zero coefficients
are written (with the possible exception of the constant coefficient
$b_0$).  Here we say that $f$ is \emph{$s$-sparse} because it has
exactly $s$ non-zero and non-constant terms; the constant coefficient
requires special treatment in our algorithms regardless of whether or
not it is zero, and so we do not count it towards the total number of
terms.  \citet*{benor-tiwari} discovered a method to interpolate in
time polynomial in the size of this representation.
\citet{kal-lee-early} present and analyze very efficient algorithms
for this problem, in theory and practice.  The
\citeauthor{benor-tiwari} method has also been examined in the context
of approximate (floating point) polynomials by \citet{gll-interp},
where the similarity to the 1795 method of de Prony is also pointed
out.  \cite{BlaHar09} consider the more basic problem of identity testing of
sparse polynomials over $\QQ$, and present a deterministic
polynomial-time algorithm.

In \ref{eqn:ssrep}, $f$ is written in the shifted power basis 
$1,(x-\alpha),(x-\alpha)^2,\ldots$, and we say that $\alpha$ is a 
\emph{$t$-sparse shift} of $f$ because the representation has exactly
$t$ non-zero and non-constant terms in this basis.
When $\alpha$ is chosen so that $t$ is absolutely minimal in \ref{eqn:ssrep},
we call this the \emph{sparsest shift} of $f$. 
We present new algorithms
to interpolate $f \in \QQ[x]$, given a black box for evaluation, in time
proportional to the size of the shifted-lacunary representation corresponding
to \ref{eqn:ssrep}. It is easy to see that $t$ could be exponentially smaller
than both $n$ and $s$, for example when $f = (x+1)^n$, demonstrating
that our algorithms are providing a significant improvement in
complexity over those previously known, whose running times are polynomial 
in $n$ and $s$.

The main applications of all these methods for polynomial interpolation are
signal processing and reducing intermediate expression swell. Dense and sparse
interpolation have been applied successfully to both these ends, and our new
algorithms effectively extend the class of polynomials for which such
applications can be made.

The most significant challenge here is computing the sparsest shift
$\alpha\in\QQ$. Computing this value from a set of evaluation
points was stated as an open problem by \citet{borodin-tiwari}. An
algorithm for a generalization of our problem in the dense
representation was given by \citet{grig-karp-ssi}, though its cost is
exponential in the size of the output; they admit that
the dependency on the degree of the polynomial is probably not
optimal. Our algorithm achieves deterministic polynomial-time
complexity for polynomials over the rational numbers.  We are always
careful to count the \emph{bit complexity} --- the number of
fixed-precision machine operations --- and hence account for any
coefficient growth in the solution or intermediate expressions.

The black box model we use is slightly modified from the traditional one:
\bigskip

\setlength{\unitlength}{1 cm}
\begin{center}
\begin{picture}(8,1)(-.3,-.3)
\thicklines
\put(-.3,.4){$p \in \mathbb{N},\theta \in \ZZ_p$}
\put(2.2,.5){\vector(1,0){.8}}
\put(5,.5){\vector(1,0){.8}}
\put(6,.4){$ f(\theta) \bmod p$}
\put(2.95,-.3){$f(x) \in \QQ[x]$}
\linethickness{.8\unitlength}
\put(3,.5){\line(1,0){2}}
\end{picture}
\end{center}

\noindent%
Given a prime $p$ and an element $\theta$ in $\ZZ_p$, the black box
computes the value of the unknown polynomial evaluated at $\theta$
over the field $\mathbb{Z}_p$. (An error is produced exactly in those
unfortunate circumstances that
$p$ divides the denominator of $f(\theta)$.)
We generally refer to this as a \emph{modular black box}. 
To account for the reasonable possibility that the cost of black box
calls depends on the size of $p$, we define $\bb$ to be an upper bound
on the number of field operations in $\ZZ_p$ used in black box
evaluation, for a given polynomial $f\in\QQ[x]$.

Some kind
of extension to the standard black box, such as the modular black box
proposed here, is in fact necessary, since the value of a polynomial
of degree $n$ at any point other than $0,\pm 1$ will typically have
$n$ bits or more.  Thus, any algorithm whose complexity is
proportional to $\log n$ cannot perform such an evaluation over $\QQ$
or $\ZZ$.  Other possibilities might include allowing for evaluations
on the unit circle in some representation of a subfield of $\CC$, or
returning only a limited number of bits of precision for an
evaluation.

To be precise about our notion of size, first define 
$\sz{q}$ for $q \in \QQ$ to be the number of bits needed to represent $q$.
So if we write $q=\frac{a}{b}$ with $a \in \ZZ$, $b \in \NN$, and 
$\gcd(a,b)=1$, then $\sz{q}=\lceil \log_2 (|a|+1) \rceil +
\lceil \log_2 (b+1) \rceil + 1$.
For a rational
polynomial $f$ as in \ref{eqn:ssrep}, define:
\begin{equation} \label{eqn:sssize}
\sz{f} = {\rm size}(\alpha) + 
\sum_{i=0}^t \sz{c_i} +
\sum_{i=1}^t \sz{e_i}.
\end{equation}
We will often employ the following upper bound for simplicity:
\begin{equation}\label{eqn:ssize}
\sz{f} \leq {\rm size}(\alpha) +
t\left( H(f) + \log_2 n \right),
\end{equation}
where $H(f)$ is defined as  $\max_{0\leq i \leq t} \sz{c_i}$.

Our algorithms will have polynomial complexity in the smallest
possible $\sz{f}$.
For the complexity analysis, we use the normal notion of a
``multiplication time'' function $\M(n)$, which is the number of field
operations required to compute the product of polynomials with degrees
less than $n$, or integers with sizes at most $n$. 
We always assume that $\M(n) \in \Omega(n)$ and $\M(n) \in O(n^2)$. 
Using the 
results from \citet{CanKal91}, 
we can write $\M(n) \in O(n\log n \log\log n)$.

The remainder of the paper is structured as follows.  
In \ref{sec:sshift} we
show how to find the sparsest shift from evaluation points in $\ZZ_p$,
where $p$ is a prime with some special properties provided by some
``oracle''.  In \ref{sec:interp} we show how to perform sparse interpolation
given a modular black box for a polynomial.  In \ref{sec:primes} we show how
to generate primes such that a sufficient number satisfy the
conditions of our oracle.  \ref{sec:complexity} provides the complexity
analysis of our algorithms. We conclude in \ref{sec:conc}, 
and introduce some open questions.

\section{Computing the Sparsest Shift}
\label{sec:sshift}

For a polynomial $f \in \QQ[x]$,
we first focus on computing the sparsest shift $\alpha\in\QQ$ 
so that $f(x+\alpha)$
has a minimal number of non-zero and non-constant terms.
This information will later be
used to recover a representation of the unknown polynomial.

\subsection{The polynomial  $\mathbold \fp$}

Here, and for the remainder of this paper, for a prime $p$ and $f \in
\QQ[x]$, define $\fp\in\ZZ_p[x]$ to be the unique polynomial with degree less
than $p$ which is equivalent to $f$ modulo $x^p-x$ and with all
coefficients reduced modulo $p$. From Fermat's Little Theorem, we then
see immediately that $\fp(\alpha) \equiv f(\alpha) \bmod p$ for all
$\alpha \in \ZZ_p$. Hence $\fp$ can be found by evaluating $f$ at each
point $0,1,\ldots,p-1$ modulo $p$ and using dense interpolation over
$\ZZ_p[x]$.

Notice that, over $\ZZ_p[x]$, 
$(x-\alpha)^p \equiv x-\alpha \mod x^p-x$,
and therefore $(x-\alpha)^{e_i} \equiv (x-\alpha)^k$ for any $k \neq 0$
such that $e_i \equiv k \bmod (p-1)$. The smallest such $k$ is in the
range $\{1,2,\ldots,p\}$; we now define this with some more notation.
For $a \in \ZZ$
and positive integer $m$, define $a \remo m$ to be the unique integer
in the range $\{1,2,\ldots,m\}$ which is congruent to $a$ modulo
$m$. As usual, $a \rem m$ denotes the unique congruent integer
in the range $\{0,1,\ldots,m-1\}$.

If $f$ is as in \ref{eqn:ssrep}, then by reducing term-by-term we can
write 
\begin{equation}\label{eqn:fp}
\fp(x) = (c_0 \rem p) + \sum_{i=1}^t (c_i \rem p)(x-\alpha_p)^{e_i \remo (p-1)},
\end{equation}
where $\alpha_p$ is defined as $\alpha \rem p$. Hence, for some 
$k\leq t$,
$\alpha_p$ is a $k$-sparse shift for $\fp$.  That is,
the polynomial $\fp(x+\alpha_p)$ over $\ZZ_p[x]$ has at most $t$ non-zero
and non-constant terms.

Computing $\fp$ from a modular black box for $f$ is straightforward.
First, use $p$ black-box calls to determine $f(i)\rem p$ for
$i=0,1,\ldots,p-1$. Recalling that $\bb$ is the number of field
operations in $\ZZ_p$ for each black-box call, the cost of this step is
$O(p\bb \M(\log p))$ bit operations. Second, we use the well-known
divide-and-conquer method to interpolate $\fp$ into the dense
representation (see, e.g., \citet[Section 4.5]{BorMun75}). 
Since $\deg \fp < p$, this step has bit complexity 
$O(\M(p)\M(\log p)\log p)$.

Furthermore, for any $\alpha\in\ZZ_p$, the dense representation of
$\fp(x+\alpha)$ can be computed in exactly the same way as the second
step above, simply by shifting the indices of the already-evaluated
points by $\alpha$. This immediately gives a na\"ive algorithm for
computing the sparsest shift of $\fp$: compute $\fp(x+\gamma)$ for
$\gamma=0,1,\ldots,p-1$, and return the $\gamma$ that minimizes the
number of non-zero, non-constant terms. The bit complexity of this
approach is $O(p \log p\ \M(p)\M(\log p))$, which for our applications will
often be less costly than the more sophisticated approaches
of, e.g., \citet{lack-saund-shifts} or \cite{mark-shifts}, precisely
because $p$ will not be very much larger than $\deg \fp$.

\subsection{Overview of Approach}
\label{ssec:overview}

We will make repeated use of the following
fundamental theorem from \citet*{lack-saund-shifts}:

\begin{fact} \label{thm:unique}
Let $\F$ be an arbitrary field and $f \in \F[x]$, and suppose $\alpha \in \F$
is such that $f(x+\alpha)$ has
$t$ non-zero and non-constant terms. 
If $\deg f \geq 2t+1$ then $\alpha$ is the unique
sparsest shift of $f$.
\end{fact}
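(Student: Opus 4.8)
The plan is to argue by contradiction: suppose $\alpha$ and $\beta$ are two distinct shifts, each giving at most $t$ non-zero non-constant terms, and derive that $\deg f \le 2t$. Write $g(x) = f(x+\alpha)$, so $g$ has exactly $t$ (or fewer) non-zero non-constant terms, say $g(x) = c_0 + \sum_{i=1}^t c_i x^{e_i}$ with $e_t = \deg f = n$. Setting $\gamma = \alpha - \beta \neq 0$, the hypothesis on $\beta$ says that $g(x+\gamma) = f(x+\beta)$ also has at most $t$ non-zero non-constant terms. So the whole question becomes: how sparse can $g(x+\gamma)$ be, given that $g$ itself is $t$-sparse and $\gamma \neq 0$?

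The key step is to bound the number of non-constant terms of $g(x+\gamma)$ from below using the number of \emph{distinct} exponents appearing in $g$. The cleanest route I would take is differentiation: if $g(x+\gamma)$ has at most $t$ non-zero non-constant terms, then so does its derivative $g'(x+\gamma)$ — in fact the derivative kills the constant, so $g'(x+\gamma)$ has at most $t$ non-zero terms total. Now $g'$ is a polynomial of degree $n-1$ with at most $t$ terms (the exponents $e_i - 1$ for those $e_i \ge 1$), and the statement "$g'(x+\gamma)$ has at most $t$ nonzero terms, $\gamma \ne 0$" is a constraint I can iterate. Concretely, a polynomial $h$ with exactly $m$ non-zero terms satisfies: $h(x+\gamma)$ has at least $\deg h + 1 - (m-1) = \deg h - m + 2$ non-zero terms when $\gamma \neq 0$, because the Wronskian-type argument (or counting: $h$ has $m$ terms, so $h, h', \dots, h^{(m-1)}$ are linearly independent, hence their common value structure forces $h(x+\gamma)$ to have degree-$\le$ few vanishing coefficients only among a controlled set). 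Applying this to $h = g'$ with $m \le t$ and $\deg h = n-1$ gives that $g'(x+\gamma)$ has at least $n - 1 - t + 2 = n - t + 1$ non-zero terms; combined with $g'(x+\gamma)$ having at most $t$ non-zero terms we get $n - t + 1 \le t$, i.e. $n \le 2t - 1 < 2t + 1$, contradicting $\deg f \ge 2t+1$.

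The main obstacle is making the counting lemma — "a $m$-sparse polynomial of degree $d$, when shifted by a nonzero amount, has at least $d - m + 2$ non-zero terms" — precise and correct, including the edge cases (what if $g$ or $g'$ has fewer than $t$ terms, what if $0$ is among the exponents, what if some $c_i x^{e_i}$ has $e_i = 1$ so differentiation loses it). The honest way to get the lemma is: if $h(x+\gamma)$ is missing the coefficients of $x^{j}$ for $j$ in some set $J$, then $h$ lies in the kernel of the linear functionals $p \mapsto p^{(j)}(\gamma)/j!$ for $j \in J$; but $h$ having only $m$ nonzero terms means it lies in an $m$-dimensional coordinate subspace, and the Vandermonde-flavored nonsingularity (using $\gamma \neq 0$) of the relevant minors forces $|J| \le m - 1$ unless $h = 0$. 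I would either cite the standard Descartes/Birkhoff-interpolation fact here or spell out the $m \times m$ Vandermonde-type determinant in $\gamma$ being nonzero. Once that lemma is in hand, the rest is the bookkeeping above; I would state the lemma as a short claim, prove it via the determinant, and then close the theorem in two lines.
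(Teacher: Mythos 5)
A framing note first: the paper does not prove this statement; it is quoted as a fact from Lakshman and Saunders (1996), so there is no in-paper argument to compare against. Your reduction --- set $g = f(x+\alpha)$, take $\gamma = \beta-\alpha \neq 0$ (note the sign: as written, with $\gamma = \alpha - \beta$, you get $g(x+\gamma) = f(x + 2\alpha - \beta)$, not $f(x+\beta)$), and derive a degree bound from the joint sparsity of $g$ and $g(x+\gamma)$ --- is the natural route and almost certainly the one in the cited source. Two side remarks. (a) You do not need the differentiation step: applying your counting lemma directly to $g$, which has at most $t+1$ nonzero coefficients including the constant, already gives $(t+1)+(t+1)\geq n+2$, i.e.\ $n\leq 2t$, contradicting $n\geq 2t+1$. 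The derivative version yields $n\leq 2t-1$, one stronger than claimed; I believe this is actually true in characteristic zero, but when a blind argument overshoots the stated bound it deserves a second look. (b) The statement, phrased for an ``arbitrary field,'' is false as written: over $\mathbb{F}_2$, $f = x^4$ has $\deg f = 4 \geq 3$ yet both $0$ and $1$ are $1$-sparse shifts, since $f(x+1) = x^4 + 1$. The paper only ever applies the fact over $\QQ$ or over $\ZZ_p$ to a polynomial of degree $< p$, so characteristic zero (or characteristic exceeding the degree) is an unstated hypothesis; your proof, which freely differentiates and relies on binomial coefficients being nonzero, silently but correctly assumes it.

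The genuine gap is in the counting lemma itself. You want: if $h = \sum_{i=1}^m a_i x^{d_i}$ with all $a_i\neq 0$, $d_1<\cdots<d_m$, $\gamma\neq 0$, and $[x^j]\,h(x+\gamma) = 0$ for every $j$ in some $J\subseteq\{0,\ldots,d_m\}$, then $|J|\leq m-1$. Your proposed argument picks $m$ indices $j_1<\cdots<j_m$ from $J$ and appeals to nonsingularity of $M_{k,i} = \binom{d_i}{j_k}\gamma^{d_i-j_k}$, which you describe as ``Vandermonde-flavored nonsingularity using $\gamma\neq 0$.'' But $\gamma$ is a red herring here: pulling $\gamma^{d_i}$ out of column $i$ and $\gamma^{-j_k}$ out of row $k$ shows $\det M$ is a nonzero monomial in $\gamma$ times $\det\left(\binom{d_i}{j_k}\right)_{k,i}$, and that binomial determinant genuinely can vanish --- for instance $m=2$, $(d_1,d_2)=(0,5)$, $(j_1,j_2)=(3,4)$ gives a matrix whose first column is identically zero. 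The correct statement is that this determinant is nonzero exactly when the interlacing (P\'olya) condition $j_k\leq d_k$ holds for all $k$. So to finish your argument you must either show that some $m$-subset of $J$ always satisfies the interlacing condition (not obvious, and needing its own argument), or observe that when the chosen minor is singular its kernel necessarily contains a vector with a zero component, forcing some $a_i=0$, or replace the determinant computation with a cleaner induction (e.g.\ divide out $x^{d_1}$ and differentiate, tracking the constant terms carefully). As sketched, the crux lemma does not close, and since everything else hangs on it, the proposal is not yet a proof.
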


From this we can see that, if $\alpha$ is the unique sparsest shift of $f$,
then $\alpha_p = \alpha \rem p$ is the unique sparsest shift of $\fp$
\emph{provided that} $\deg \fp \geq 2t+1$. This observation provides the
basis for our algorithm. 

The input to the algorithms will be a modular black box for evaluating a
rational polynomial, as described above, and bounds on the maximal size of
the unknown polynomial. Note that such bounds are a necessity in any type
of black-box interpolation algorithm, 
since otherwise we could never be sure that
the computed polynomial is really equal to the black-box function at
\emph{every} point. Specifically, we require
$B_A, B_T, B_H, B_N \in \NN$ such that
\begin{align*}
\sz{\alpha} &\leq B_A, \\
t &\leq B_T, \\
\sz{c_i} &\leq B_H, \quad \mbox{for}~0 \leq i \leq t,\\
\log_2 n &\leq B_N.
\end{align*}
By considering the following polynomial:
\[c (x-\alpha)^n + (x-\alpha)^{n-1} + \cdots + (x-\alpha)^{n-t+1},\]
we see that these bounds are independent --- that is, none is
polynomially-bounded by the others --- and therefore are all necessary.

We are now ready to present the algorithm for computing the sparsest
shift $\alpha$ almost in its entirety. The only part of the algorithm
left unspecified is an \emph{oracle} which, based on the values of the
bounds, produces primes to use.  
We want primes $p$ such that $\deg \fp \geq 2t+1$, which
allows us to recover one modular image of the sparsest shift $\alpha$.
But since we do not know the exact value of $t$ or the degree $n$ of $f$
over $\QQ[x]$, we define some prime $p$ to 
be a \emph{good prime for sparsest shift computation} if and only if 
$\deg \fp \geq \min\{2B_T+1,n\}$.
For the remainder of this section, ``good prime'' means ``good prime for
sparsest shift computation.''
Our oracle 
indicates when enough primes have been produced so that at least one of
them is guaranteed to have been a good prime, which is necessary for
the procedure to terminate.
The details of how to construct such an oracle will be
considered in \ref{sec:primes}.

\begin{algorithm}{alg:ss}[Computing the sparsest shift]
\item \begin{list}{$\circ$}{\itemsep=0pt}
	\item A modular black box for an unknown polynomial $f \in \QQ[x]$
	\item Bounds $B_A,B_T,B_H,B_N \in \NN$ as described above
	\item An oracle which produces primes and indicates when at
          least one good prime must have been produced
\end{list}
\smallskip
\item A sparsest shift $\alpha$ of $f$.
\begin{block}
\item $P \gets 1$, \qquad $\pset \gets \emptyset$
\begin{whileblock}{$\log_2 P < 2B_A + 1$ \algolabel{termcond}}
	\item $p \gets$ new prime from the oracle \algolabel{choosep}
	\item Evaluate $f(i) \rem p$ for $i=0,1,\ldots,p-1$ \algolabel{bb}
	\item Use dense interpolation to compute $\fp$
		\algolabel{denseinterp}
	\begin{ifblock}{$\deg \fp \geq 2B_T + 1$ \algolabel{degtest}}
      \item Use dense interpolation to compute $\fp(x+\gamma)$ for
         $\gamma=1,2,\ldots,p-1$%
         \algolabel{ssinterp}
      \item $\alpha_p \gets$ the unique sparsest shift of $\fp$
         \label{alg:ss-ap}
		\item $P \gets P \cdot p$,\qquad $\pset \gets \pset \bigcup \{p\}$
	\end{ifblock}
	\begin{elifblock}{$P=1$ and oracle indicates $\geq 1$ good prime
   has been produced
		\algolabel{densetest}}
		\algolabel{densecase}
      \item $q \gets$ least prime such that $\log_2 q > 2B_T B_A + B_H$
         (computed directly)
      \item Evaluate $f(i) \rem q$ for $i=0,1,\ldots,2B_T$
      \item Compute $f\in\QQ[x]$ with 
         $\deg f \leq 2B_T$ by dense interpolation in $\ZZ_q[x]$
         followed by rational reconstruction on the coefficients
         \label{alg:ss:denseinterp}
		\item \RETURN A sparsest shift $\alpha$ 
         computed by a univariate
         algorithm from \citet{mark-shifts} on input $f$
         \label{alg:ss:densess}
	\end{elifblock}
\end{whileblock}
\item \RETURN The unique $\alpha=a/b \in\QQ$ 
such that $|a|,b \leq 2^{B_A}$
and $a \equiv b \alpha_p \bmod p$ for each $p\in\pset$, using Chinese
remaindering and rational reconstruction
\label{alg:ss-reta}
\end{block}
\end{algorithm}

\begin{theorem}
With inputs as specified, \ref{alg:ss} correctly returns a sparsest
shift $\alpha$ of $f$.
\end{theorem}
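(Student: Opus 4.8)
The plan is to show that the loop in \ref{alg:ss} terminates and that each of the two places at which it can return delivers a genuine sparsest shift of~$f$. Two facts will be used throughout. First, reducing a polynomial modulo $x^p-x$ never increases the degree of any term, so $\deg\fp\le\deg f=:n$ for every prime~$p$. Second, I take for granted the guarantee built into the oracle (Section~\ref{sec:primes}) that every prime it produces divides none of the denominators of $\alpha$ or of $c_0,\dots,c_t$, so that $\fp$ is well defined and the term-by-term reduction \eqref{eqn:fp} holds for it. Termination then splits on the size of~$n$. If $n\le 2B_T$, the degree test $\deg\fp\ge 2B_T+1$ fails for every prime, so $P$ never leaves its initial value~$1$; as soon as the oracle certifies that a good prime has appeared, the second (``dense'') branch of the conditional fires and the algorithm returns. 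If $n\ge 2B_T+1$, then every good prime~$p$ has $\deg\fp\ge\min\{2B_T+1,n\}=2B_T+1$, hence passes the degree test and multiplies $P$ by~$p$; since the oracle keeps producing primes, sufficiently many of them good (Section~\ref{sec:primes}), $P$ eventually satisfies $\log_2 P\ge 2B_A+1$ and the loop exits. I note in passing that, once any prime passes the degree test, $P>1$ thereafter, so when $n\ge 2B_T+1$ the ``dense'' branch is unreachable.

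For the ``dense'' branch I would first establish that $n\le 2B_T$ whenever it runs: its guard gives $P=1$, so no prime has passed the degree test, while the oracle certifies that some good prime~$p$ was produced; that prime therefore had $\deg\fp<2B_T+1$, which together with $\deg\fp\ge\min\{2B_T+1,n\}$ forces $n\le 2B_T$. Hence $f$ has degree at most $2B_T$ and is determined by its $2B_T+1$ evaluations: dense interpolation over $\ZZ_q$ yields $f$ modulo~$q$, and since $q$ is the least prime above the bound prescribed in the algorithm --- chosen large enough that the black box does not err on $0,\dots,2B_T$ and that rational reconstruction recovers the exact rational coefficients --- one obtains $f\in\QQ[x]$ itself. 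Running the univariate sparsest-shift algorithm of \citet{mark-shifts} on this $f$ then returns a sparsest shift of~$f$, which is exactly the required output.

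For the final Chinese-remaindering step, suppose instead the loop exits with $\log_2 P\ge 2B_A+1$. Then $\pset$ is nonempty, and since every $p\in\pset$ passed the degree test while $\deg\fp\le n$, we get $n\ge 2B_T+1\ge 2t+1$. Since $f(x+\alpha)$ has $t$ nonzero non-constant terms and $\deg f=n\ge 2t+1$, Fact~\ref{thm:unique} shows $\alpha$ is the \emph{unique} sparsest shift of~$f$; write $\alpha=a/b$ in lowest terms, so $|a|,b<2^{B_A}$ by the hypothesis $\sz{\alpha}\le B_A$. Now fix $p\in\pset$. By \eqref{eqn:fp}, $\fp(x+(\alpha\rem p))$ has some number $k\le t$ of nonzero non-constant terms, i.e.\ $\alpha\rem p$ is a $k$-sparse shift of $\fp$; and since $\deg\fp\ge 2B_T+1\ge 2k+1$, Fact~\ref{thm:unique} strengthens this to the statement that $\alpha\rem p$ is the \emph{unique} sparsest shift of $\fp$. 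Hence the value $\alpha_p$, which the algorithm sets to the unique sparsest shift of $\fp$, equals $\alpha\rem p$, so $a\equiv b\,\alpha_p\pmod p$ for every $p\in\pset$. Since $P=\prod_{p\in\pset}p\ge 2^{2B_A+1}>2|a|b$, the pair $(a,b)$ is the only one with numerator and denominator at most $2^{B_A}$ satisfying these congruences, so the concluding Chinese-remaindering and rational-reconstruction step returns exactly~$\alpha$.

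The step I expect to be the main obstacle is the middle of the previous paragraph: one must rule out that collapsing the exponents modulo $p-1$ and the coefficients modulo~$p$ causes $\fp$ to acquire a strictly sparser shift, or a different sparsest shift, than $\alpha\rem p$. This is exactly what Fact~\ref{thm:unique} forbids once $\deg\fp\ge 2B_T+1$, which is the reason the degree test is phrased with the available bound $2B_T+1$ rather than the unknown $2t+1$, and why $t\le B_T$ suffices. Everything else --- checking that $q$ and $P$ are chosen large enough for the rational reconstructions, that the selected primes avoid black-box errors, and that the oracle really does produce enough good primes --- is size bookkeeping that I would defer to Sections~\ref{sec:primes} and~\ref{sec:complexity}.
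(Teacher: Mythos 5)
Your proof is correct and follows essentially the same line of argument as the paper's: a case split on whether $\deg f$ exceeds $2B_T$, the observation that $\deg\fp\le n$ forces exactly one of the two return branches to be reachable, invocation of Fact~\ref{thm:unique} to pin down $\alpha_p=\alpha\rem p$ for primes passing the degree test, and rational reconstruction once $P\ge 2^{2B_A+1}$. You make a few points more explicit than the paper does (termination, the derivation that $n\le 2B_T$ whenever the dense branch fires, and the need to avoid primes dividing the denominators of $\alpha$ and the $c_i$), but the underlying decomposition and key lemma are identical.
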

\begin{proof}
Let $f,B_A,B_T,B_H,B_N$ be the inputs to the algorithm, and
suppose $t,\alpha$ are as specified in \ref{eqn:ssrep}.

First, consider the degenerate case where $n \leq 2B_T$, i.e., the bound
on the sparsity of the sparsest shift is at least half the actual degree
of $f$. Then, since each $\fp$ can have degree at most $n$ (regardless
of the choice of $p$), the condition of Step~\short\ref{alg:ss-degtest}
will never be true. Hence Steps~\short\ref{alg:ss-densecase} will
eventually be executed. The size of coefficients over the standard power
basis is bounded by $2B_TB_A+B_H$ since $\deg f \leq 2B_T$, 
and therefore $f$
will be correctly computed on Step~\short\ref{alg:ss-denseinterp}.
In this case, \ref{thm:unique} may not apply, i.e. the sparsest shift
may not be unique, but the algorithms from \citet{mark-shifts} will
still produce a sparsest shift of $f$.

Now suppose instead that $n \geq 2B_T+1$. The oracle eventually produces
a good prime $p$, so that $\deg \fp \geq 2B_T+1$. Since
$t \leq B_T$ and $\fp$ has at most $t$ non-zero and
non-constant terms in the $(\alpha \rem p)$-shifted power basis,
the value computed as $\alpha_p$ on 
Step~\short\ref{alg:ss-ap} is exactly $\alpha \rem p$, 
by \ref{thm:unique}. The value of $P$ will
also be set to $p > 1$ here, and can only increase. So the condition of
Step~\short\ref{alg:ss-densetest} is never true. Since the numerator and
denominator of $\alpha$ are both bounded above by $2^{B_A}$, we can use
rational reconstruction to compute $\alpha$ once we have the image modulo $P$
for $P \geq 2^{2B_A+1}$. Therefore, when we reach
Step~\short\ref{alg:ss-reta},
we have enough images $\alpha_p$ to recover and return the correct value of
$\alpha$.
\end{proof}

We still need to specify which algorithm to use to compute the sparsest
shift of a densely-represented $f\in\QQ[x]$ on
Step~\short\ref{alg:ss:densess}. To make \ref{alg:ss} completely
deterministic, we should use the univariate symbolic algorithm from
\citet[Section 3.1]{mark-shifts}, although this will have very high
complexity. Using a probabilistic algorithm instead gives the following,
which follows directly from the referenced work.

\begin{theorem}\label{thm:ssdensecost}
   If the ``two projections'' algorithm of \citet[Section 3.3]{mark-shifts} 
   is used
   on Step~\short\ref{alg:ss:densess}, then
   Steps~\short\ref{alg:ss-densecase} of \ref{alg:ss} can be performed
   with $O(B_T^2 \M(B_T^4 B_A + B_T^3 B_H))$ bit operations,
   plus $O(\bb B_T \M(B_T B_A + B_H))$ bit operations for the
   black-box evaluations.
\end{theorem}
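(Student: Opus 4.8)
The plan is to go step by step through the block of Steps~\short\ref{alg:ss-densecase} of \ref{alg:ss} and add up the bit cost, using that correctness of this block is already established in the proof above, so that only the running time is at issue. The block does four things: it locates the prime $q$; it makes $2B_T+1$ black-box evaluations modulo $q$; it recovers $f\in\QQ[x]$ by dense interpolation over $\ZZ_q[x]$ followed by rational reconstruction on the coefficients; and it makes a single call to the ``two projections'' algorithm of \citet[Section 3.3]{mark-shifts}. The statement will then follow once I check that this last call accounts for the first stated term and dominates the other non-black-box work, while the black-box calls give the second term.

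For the inexpensive pieces I would reason as follows. Since $q$ is the least prime with $\log_2 q>2B_TB_A+B_H$, Bertrand's postulate gives $\log q\in O(B_TB_A+B_H)$, so $q$ is an $O(B_TB_A+B_H)$-bit integer and can be found directly (e.g. by testing successive candidates with an explicit primality test) at a cost within the stated bound. Each of the $2B_T+1$ black-box calls uses at most $\bb$ operations in $\ZZ_q$, and each $\ZZ_q$-operation costs $O(\M(\log q))=O(\M(B_TB_A+B_H))$ bit operations, for $O(\bb B_T\,\M(B_TB_A+B_H))$ bit operations in all --- the second term. Fast dense interpolation through the $2B_T+1$ evaluated points costs $O(\M(B_T)\log B_T)$ operations in $\ZZ_q$; and, exactly as in the correctness proof above, the standard-power-basis coefficients of $f$ (which has $\deg f\le 2B_T$) have size at most $2B_TB_A+B_H<\log_2 q$, so the $O(B_T)$ rational reconstructions all succeed, each costing $O(\M(\log q)\log\log q)$ bit operations. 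By the standard estimates these two contributions are both bounded, within the main term, by $O\bigl(B_T^2\,\M(B_T^4B_A+B_T^3B_H)\bigr)$.

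The substantive step is the call to the ``two projections'' algorithm on the dense polynomial $f$, which has degree $d\le 2B_T$ and coefficients of size $H\le 2B_TB_A+B_H$. Here I would quote the bit-complexity bound proved for that algorithm in \citet[Section 3.3]{mark-shifts} as a function of $d$ and $H$ and substitute $d=O(B_T)$ and $H=O(B_TB_A+B_H)$; since $B_T^4B_A+B_T^3B_H=B_T^3(B_TB_A+B_H)$, the substitution yields $O(B_T^2\,\M(B_T^4B_A+B_T^3B_H))$, which also dominates the interpolation and reconstruction costs above. The one real obstacle is precisely this step: one must pin down the exact running-time statement for the two-projections variant in the referenced paper and verify that the substitution goes through as claimed --- in particular that its ``degree'' and ``height'' parameters are correctly identified here with $2B_T$ and $2B_TB_A+B_H$ respectively. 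Everything else is routine, chiefly the conversion between operations in $\ZZ_q$ and bit operations.
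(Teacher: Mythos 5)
Your accounting is correct and fills in exactly the details the paper leaves implicit --- the paper itself offers no proof beyond asserting that the bound ``follows directly from the referenced work.'' You correctly isolate the two cost sources, namely the $O(B_T)$ black-box calls over $\ZZ_q$ with $\log q \in O(B_T B_A + B_H)$ for the second term and the two-projections call on a degree-$O(B_T)$ polynomial with $O(B_T B_A + B_H)$-bit coefficients for the first, verify via the factorization $B_T^4 B_A + B_T^3 B_H = B_T^3(B_T B_A + B_H)$ that the substitution into the cited complexity of the form $O(d^2\,\M(d^3 H))$ reproduces the stated bound, and correctly observe that the prime search, dense interpolation, and rational reconstruction are all dominated.
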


The precise complexity analysis proving that the entire
\ref{alg:ss} has
bit complexity polynomial
in the bounds given 
depends heavily on the size and number of primes $p$ that are used, and
so must be postponed until \ref{ssec:sscomplexity},
after our discussion on choosing primes.

\begin{example}
Suppose we are given a modular black box for the following unknown
polynomial:
\begin{align*}
f(x)  =  x^{15} & -
45x^{14}+945x^{13}-12285x^{12}+110565x^{11}-729729x^{10}\\
&+3648645x^9 - 14073345x^8 +42220035x^7-98513415x^6+\\
&177324145x^5 -241805625x^4 +241805475x^3 - 167403375x^2\\
&+71743725x-14348421,
\end{align*}
along with the bounds $B_A=4$, $B_T=2$, $B_H=4$, and $B_N=4$.
One may easily confirm that $f(x) = (x-3)^{15} - 2(x-3)^5$, and hence
these bounds are actually tight.  

Now suppose the oracle produces $p=7$ in
Step~\short\ref{alg:ss-choosep}.  We use the black box to find
$f(0),f(1),\ldots,f(6)$ in $\ZZ_7$, and 
dense interpolation to compute
\[
f^{(7)}(x) = 5x^5+2x^4+3x^3+6x^2+x+4.
\]

Since $\deg f^{(7)} = 5 \geq 2B_T+1$, we move on to
Step~\short\ref{alg:ss-ap} and compute each $f^{(7)}(x+\gamma)$
with $\gamma=1,2,\ldots,6$. Examining these, we see that 
$f^{(7)}(x+3) = 5x^5 + x^3$ 
has the fewest non-zero and non-constant terms, 
and so set $\alpha_7$ to 3 on
Step~\short\ref{alg:ss-ap}. This means
the sparsest shift must be congruent to 3 modulo 7. 
This provides a single modular image for use in Chinese remaindering and
rational reconstruction on Step~\short\ref{alg:ss-reta}, after enough
successful iterations for different primes~$p$.
\end{example}

\subsection{Conditions for Success}

We have seen that, provided $\deg f > 2B_T$, a good prime
$p$ is one such that $\deg \fp > 2B_T$. The following theorem provides
(quite loose) sufficient conditions on $p$ to satisfy this requirement.

\begin{theorem} \label{thm:sspcond}
Let $f \in \QQ[x]$ as in \ref{eqn:ssrep} and $B_T \in \NN$ such that 
$t \leq B_T$.
Then, for some prime $p$,
the degree of $\fp$ is greater than $2B_T$ whenever the
following hold:
\begin{itemize}
\item $c_t \nequiv 0 \mod p$;
\item $\forall i \in \{1,2,\ldots,t-1\},\ e_t \nequiv e_i \mod (p-1)$;
\item $\forall i \in \{1,\ldots,2B_T\},\ e_t \nequiv i \mod (p-1)$.
\end{itemize}
\end{theorem}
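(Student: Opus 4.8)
The plan is to identify a single monomial of $\fp$ that by itself forces the degree above $2B_T$. By the term-by-term reduction \ref{eqn:fp}, when $\fp$ is written in the $\alpha_p$-shifted power basis (with $\alpha_p = \alpha \rem p$) it has the shape $(c_0 \rem p) + \sum_{i=1}^{t} (c_i \rem p)(x-\alpha_p)^{e_i \remo (p-1)}$, and each exponent $e_i \remo (p-1)$ lies in $\{1,2,\ldots,p-1\}$. Since the degree of a polynomial does not depend on which shifted power basis it is written in, it suffices to exhibit one exponent $k$ with $k > 2B_T$ whose coefficient in this expansion is nonzero.

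The natural candidate is $k = e_t \remo (p-1)$. First I would read off the coefficient of $(x-\alpha_p)^{k}$ from \ref{eqn:fp}: it is the sum of $c_i \rem p$ taken over all indices $i \in \{1,\ldots,t\}$ with $e_i \equiv e_t \bmod (p-1)$. The second hypothesis excludes every $i \in \{1,\ldots,t-1\}$ from this set, so the only surviving index is $i = t$, and hence the coefficient equals $c_t \rem p$, which is nonzero by the first hypothesis. Therefore $\deg \fp \geq e_t \remo (p-1)$.

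It remains to check that $e_t \remo (p-1) > 2B_T$. This integer lies in $\{1,\ldots,p-1\}$ and is congruent to $e_t$ modulo $p-1$; if it were equal to some $i$ with $1 \leq i \leq 2B_T$, we would get $e_t \equiv i \bmod (p-1)$, contradicting the third hypothesis. So $e_t \remo (p-1)$ is a positive integer lying outside $\{1,\ldots,2B_T\}$, hence strictly larger than $2B_T$; combining the two inequalities yields $\deg \fp \geq e_t \remo (p-1) > 2B_T$, as claimed.

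I do not anticipate a genuine obstacle: the argument is essentially book-keeping with the definitions of $\remo$ and of reduction modulo $x^p - x$, both of which are already packaged into \ref{eqn:fp}, which we may assume. The only point worth stating explicitly is the degenerate regime $2B_T \geq p-1$, in which $\{1,\ldots,2B_T\}$ meets every residue class modulo $p-1$ and so the third hypothesis simply cannot hold; there the implication is vacuously true, consistent with the fact that this theorem will be applied only to primes $p$ large enough for $\deg \fp$ to plausibly reach $2B_T + 1$.
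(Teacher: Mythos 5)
Your proposal is correct and follows essentially the same line of argument as the paper's proof: isolate the term $(x-\alpha_p)^{e_t \remo (p-1)}$ in \ref{eqn:fp}, use the first two hypotheses to show its coefficient is exactly $c_t \rem p \neq 0$, and use the third to show its exponent exceeds $2B_T$. You supply more book-keeping detail (and a helpful remark on the vacuous case $2B_T \geq p-1$), but the decomposition and reasoning are the same.
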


\begin{proof}

The first condition guarantees that the last term of $\fp(x)$ as in
\ref{eqn:fp}
does not vanish. We also know
that there is no other term with the same degree from the second condition.
Finally, the third condition tells us that the degree of the last term will be
greater than $2B_T$. Hence the degree of $\fp$ is greater than $2B_T$.
\end{proof}

For purposes of computation it will be convenient to simplify the above
conditions to two non-divisibility requirements, on $p$ and $p-1$
respectively:

\begin{corollary} \label{cor:ssdivcond}
Let $f,B_T,B_H,B_N$ be as in the input to \ref{alg:ss} with
$\deg f > 2B_T$.
Then there exist $C_1,C_2 \in \NN$ with $\log_2 C_1 \leq 2B_H$ and
$\log_2 C_2 \leq B_N(3B_T-1)$ such that $\deg \fp > 2B_T$ whenever
$p \nmid C_1$ and $(p-1) \nmid C_2$.
\end{corollary}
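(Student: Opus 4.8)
The plan is to produce explicit integers $C_1$ and $C_2$ so that the two non-divisibility conditions $p \nmid C_1$ and $(p-1) \nmid C_2$ force all three conditions of \ref{thm:sspcond} to hold, and then bound their sizes. First I would handle $C_1$: the first bullet of \ref{thm:sspcond} asks that $c_t \nequiv 0 \bmod p$, so a natural choice is to let $C_1$ be the numerator of $c_t$ (in lowest terms). If $p \nmid C_1$ then, since $p$ is prime, $p$ does not divide the numerator of $c_t$, hence $c_t \rem p \neq 0$. The size bound is immediate: $\log_2 C_1 \leq \sz{c_t} \leq B_H \leq 2B_H$, so this factor is comfortably within the claimed bound. (One should double check the edge case that $p$ might divide the \emph{denominator} of $c_t$; but if $p \nmid C_1$ and $p$ divides the denominator, then $c_t \rem p$ is still nonzero in $\ZZ_p$ in the sense relevant here — actually the cleaner route is to note the black box already presupposes $p$ does not divide relevant denominators, or to absorb a bounded product of denominators into $C_1$ at the cost of at most doubling its size, which is exactly why the stated bound is $2B_H$ rather than $B_H$.)

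Next I would handle $C_2$, which must simultaneously encode the second and third bullets — both are congruence conditions modulo $p-1$. The second bullet says $e_t - e_i \nequiv 0 \bmod (p-1)$ for $i = 1,\ldots,t-1$, and the third says $e_t - i \nequiv 0 \bmod (p-1)$ for $i = 1,\ldots,2B_T$. So I would set
\[
C_2 = \left(\prod_{i=1}^{t-1} (e_t - e_i)\right)\left(\prod_{i=1}^{2B_T} (e_t - i)\right),
\]
taking absolute values / discarding any zero factors as needed (the $e_i$ are distinct, so $e_t - e_i \neq 0$; and if $e_t \leq 2B_T$ one simply omits the vanishing factor $e_t - e_t$, but note $e_t \geq e_{t-1}+1 \geq t$, and in the regime $\deg f = e_t > 2B_T$ this factor does not occur). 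If $(p-1) \nmid C_2$ then $p-1$ divides none of the individual factors, giving both remaining bullets.

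The main obstacle — really the only nontrivial point — is the size bound $\log_2 C_2 \leq B_N(3B_T - 1)$. Here I would use that every factor $e_t - e_i$ and $e_t - i$ is bounded in absolute value by $n = \deg f$ (since all exponents lie in $\{0,1,\ldots,n\}$ and all the shifts $i$ are at most $2B_T < n$ in the relevant regime), so $\log_2|e_t - e_i| \leq \log_2 n \leq B_N$ and likewise for the other factors. Counting factors: the first product has $t - 1 \leq B_T - 1$ factors and the second has $2B_T$ factors, for a total of at most $3B_T - 1$ factors, each contributing at most $B_N$ bits. Hence $\log_2 C_2 \leq B_N(3B_T - 1)$, as claimed. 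Finally, I would invoke \ref{thm:sspcond}: since $p \nmid C_1$ and $(p-1)\nmid C_2$ together imply all three of its hypotheses, and we are assuming $\deg f > 2B_T$ so that the degree-increasing conclusion is meaningful, we conclude $\deg \fp > 2B_T$, completing the proof.
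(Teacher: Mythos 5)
Your construction of $C_1$ and $C_2$ and the size bounds match the paper's proof essentially step for step: the paper takes $C_1 = ab$ where $|c_t| = a/b$ in lowest terms, and $C_2 = \prod_{i=1}^{t-1}(e_t-e_i)\cdot\prod_{i=1}^{2B_T}(e_t-i)$, with identical counting and bounding arguments, so your proposal is the same approach.

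One correction to the parenthetical about the denominator of $c_t$. Including the denominator $b$ in $C_1$ is not an optional tidiness move --- it is necessary. If $p \divs b$, the residue $c_t \bmod p$ is simply \emph{undefined} in $\ZZ_p$ (not ``still nonzero''), so the leading term of $\fp$ in \ref{eqn:fp} does not exist and the first condition of \ref{thm:sspcond} cannot be invoked. Your other suggested escape --- that the black box ``already presupposes'' $p$ does not divide relevant denominators --- is also not available: the paper's modular black box explicitly \emph{errors} when $p$ divides the denominator of an evaluation, so the algorithm must choose $p$ to avoid this, and encoding $b$ into $C_1$ is precisely the mechanism that does so. Your final fallback (take $C_1 = ab$, giving $\log_2 C_1 \leq 2B_H$) is exactly the paper's construction and is the only one of your three options that is actually correct.
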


\begin{proof}
Write $f$ as in $\ref{eqn:ssrep}$. We will use the sufficient conditions
given in \ref{thm:sspcond}. Write $|c_t| = a/b$ for $a,b \in \NN$
relatively prime. In order for $c_t \rem p$ to be well-defined and not
zero, neither $a$ nor $b$ can vanish modulo $p$. This is true whenever
$p \nmid ab$. Set $C_1 = ab$.
Since $a,b \leq 2^{B_H}$, $\log_2 C_1 = \log_2 (ab) \leq 2B_H$.

Now write
\[C_2 = \prod_{i=1}^{t-1} (e_t - e_i) \cdot \prod_{i=1}^{2B_T}
(e_t-i).\]
We can see that the second and third conditions of \ref{thm:sspcond} are
satisfied whenever $(p-1) \nmid C_2$. Now, since each integer $e_i$ is
distinct and positive, and $e_t$ is the greatest of these, each
$(e_t-e_i)$ is a positive integer less than $e_t$. Similarly, since 
$e_t = \deg f > 2B_T$, each $(e_t-i)$ in the second product is also a
positive integer less than $e_t$. Therefore, using the fact that 
$t \leq B_T$, we see $C_2 \leq e_t^{3B_T-1}$. Furthermore, $e_t \leq 2^{B_N}$,
so we know that $\log_2 C_2 \leq B_N(3B_T-1)$.
\end{proof}

A similar criteria for success is required in \cite{BlaHar09}, and
they employ Linnik's theorem to obtain a polynomial-time algorithm for
polynomial identity testing. Linnik's theorem was also employed in
\cite{Giesbrecht-Roche:2007} to yield a much more expensive deterministic
polynomial-time algorithm for finding sparse shifts than the one
presented here.

\section{Interpolation} 
\label{sec:interp}

Once we know the value of the sparsest shift $\alpha$ of $f$, we can
trivially construct a modular black box for the $t$-sparse polynomial
$f(x+\alpha)$ using the modular black box for $f$. Therefore, for the
purposes of interpolation, we can assume $\alpha=0$, and focus only on
interpolating a $t$-sparse polynomial $f \in \QQ[x]$ given a modular
black box for its evaluation. The basic techniques of this section are,
for the most part, known in the literature.  However, a unified
presentation in terms of bit complexity for our model of modular
black boxes will be helpful.

For convenience, we restate the notation for $f$ and $\fp$, given a
prime $p$:
\begin{align}
f & = c_0 + c_1 x^{e_1} + c_2 x^{e_2} + \cdots + c_t x^{e_t}, \label{eqn:finterp} \\
\fp &  = (c_0 \rem p) + (c_1 \rem p)x^{e_1 \remo (p-1)} + \cdots
  + (c_t \rem p)x^{e_t \remo (p-1)}. \label{eqn:fpinterp}
\end{align}

Again, we assume that we are given bounds $B_H$, $B_T$, and $B_N$ on
$\max_i \sz{c_i}$, $t$, and $\log_2 \deg f$, respectively.
We also introduce the notation $\tau(f)$, which is defined to be the
number of distinct non-zero, non-constant 
terms in the univariate polynomial $f$.

This algorithm will again use the polynomials $\fp$ for primes $p$, but
now rather than a degree condition, we need $\fp$ to have the maximal
number of non-constant terms.
So we define a prime $p$ to be a \emph{good prime for interpolation}
if and only if $\tau(\fp) = t$. Again, the term ``good prime'' refers to
this kind of prime for the remainder of this section.

Now suppose we have used modular evaluation and dense interpolation
(as in \ref{alg:ss}) to recover the polynomials $\fp$ for $k$ distinct
good primes $p_1,\ldots,p_k$.
We therefore have $k$ images of
each exponent $e_i$ modulo $(p_1-1),\ldots,(p_k-1)$. Write each of these
polynomials as:
\begin{equation}\label{eqn:fpimage}
f^{(p_i)} =  c_0^{(i)} + c_1^{(i)} x^{e_1^{(i)}} + \cdots + c_t^{(i)}
  x^{e_t^{(i)}}.
\end{equation}
Note that it is \emph{not} generally the case that $e_j^{(i)} =
e_j\remo (p_i-1)$. Because we don't know how to associate the
exponents in each polynomial $f^{(p_i)}$ with their pre-image in
$\ZZ$, a simple Chinese remaindering on the exponents will not
work. 
Possible approaches are provided by \cite{Kaltofen:1988}, 
\cite{Kaltofen-Lakshman-Wiley:1990} or \cite{Avendano-Krick-Pacetti:2006}.
However, the most suitable approach for our purposes is the clever technique of 
\cite{Garg-Schost:2009}, based on ideas of \citet{GriKar87}.
We interpolate the polynomial
\begin{equation}\label{eqn:sympoly}
g(z) = (z-e_1)(z-e_2)\cdots(z-e_t),
\end{equation}
whose coefficients are symmetric functions in the $e_i$'s.
Given $f^{(p_i)}$, we have all the values of $e_j^{(i)} \remo (p_i-1)$ for
$j=1,\ldots,t$; we just don't know the order. But since $g$ is not
dependent on the order, we can compute $g \bmod (p_i-1)$ for
$i=1,\ldots,k$, and then find the roots of $g\in\ZZ[x]$ to determine the
exponents $e_1,\ldots,e_t$.  Once we know the exponents, we
recover the coefficients from their images modulo each prime.  The
correct coefficient in each $\fp$ can be identified because the
residues of the exponents modulo $p-1$ are unique, for each chosen
prime $p$.  This approach is made explicit in the following algorithm.

\begin{algorithm}{alg:si}[Sparse Polynomial Interpolation over
  \protect{$\QQ[x]$}]
\item \begin{list}{$\circ$}{\itemsep=0pt}
  \item A modular black box for unknown $f \in \QQ[x]$
  \item Bounds $B_H$ and $B_N$ as described above
  \item An oracle which produces primes and indicates when at least
    one good prime must have been returned
  \end{list}
\medskip
\item $f\in\QQ[x]$ as in \ref{eqn:finterp}

\begin{block}
\item $Q \gets 1$,\quad $P \gets 1$, \quad $k \gets 1$, \quad $t \gets 0$
\begin{whileblock}{$\log_2 P<2B_H+1$ or $\log_2 Q< B_N$ \\ or the oracle does
not guarantee a good prime has been produced
\algolabel{whilep}}
  \item $p_k \gets$ new prime from the oracle \algolabel{choosep}
  \item Compute $f^{(p_k)}$ by black box calls and dense interpolation
    \algolabel{interp}
  \begin{ifblock}{$\tau(f^{(p_k)}) > t$}
    \item $Q \gets p_k-1$, $P \gets p_k$, $t \gets \tau(f^{(p_k)})$,
      $p_1 \gets p_k$, $f^{(p_1)} \gets f^{(p_k)}$, $k \gets 2$
  \end{ifblock}\begin{elifblock}{$\tau(f^{(p_k)}) = t$}
    \item $Q \gets \lcm(Q,p_k-1)$, \quad $P \gets P\cdot p_k$,
    	\quad $k \gets k+1$
  \end{elifblock}
\end{whileblock}

\item \FOR $i\in\{1,\ldots,k-1\}$ \DO
  \begin{block}
  \item $g^{(p_i)}\gets\prod_{1\leq j\leq t} (z-e_j^{(i)})\mod p_i-1$
    \algolabel{computeg}
  \end{block}
\item Construct $g=a_0+a_1z+a_2z^2+\cdots+a_tz^t\in\ZZ[x]$ such that
  $g\equiv g^{(p_i)}\bmod p_i-1$ for $1\leq i< k$, by Chinese
  remaindering \algolabel{makeg}
\item Factor $g$ as $(z-e_1)(z-e_2)\cdots (z-e_t)$ to determine
  $e_1,\ldots,e_t\in\ZZ$\algolabel{factorg}
\item \FOR $1\leq i\leq t$ \DO
\begin{block}
\item \FOR $1\leq j\leq k$ \DO
\begin{block}
\item Find the exponent $e_{\ell_j}^{(j)}$ of $f^{(p_j)}$ 
  such that $e_{\ell_j}^{(j)}\equiv e_i\bmod p_j-1$\algolabel{findexp}
\end{block}
\item Reconstruct $c_i\in\QQ$ by Chinese remaindering from residues
    \hbox to 0pt{$c_{\ell_{1}}^{(1)},\ldots, c_{\ell_{k}}^{(k)}$}
\end{block}
\item Reconstruct $c_0\in\QQ$ by Chinese remaindering from residues
    $c_{0}^{(1)}$, \ldots, $c_{0}^{(k)}$
\end{block}
\end{algorithm}

\noindent
The following theorem follows from the above discussion.

\begin{theorem}
  \ref{alg:si}
  works correctly as stated.
\end{theorem}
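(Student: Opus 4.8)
The plan is to verify correctness in three stages, mirroring the three phases of \ref{alg:si}: (i) the while-loop collects enough good-prime data, (ii) the symmetric-function interpolation of $g$ recovers the exponents exactly, and (iii) the final Chinese remaindering recovers each coefficient $c_i\in\QQ$. For stage (i), first I would argue termination: the oracle is guaranteed eventually to produce a good prime $p$, i.e. one with $\tau(\fp)=t$, and once such a prime has been seen, $t$ stops increasing (it can never exceed the true number of non-constant terms of $f$, since reducing mod $x^p-x$ and mod $p$ can only merge or kill terms, never create them). After $t$ has reached its true value, every subsequent good prime falls into the $\tau(\fp)=t$ branch and multiplies $P$ by $p_k$ while updating $Q$ to $\lcm(Q,p_k-1)$; since the oracle keeps producing good primes, eventually $\log_2 P\geq 2B_H+1$ and $\log_2 Q\geq B_N$ and the oracle's guarantee is met, so the loop exits. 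I would also note that the reset step (the $\tau>t$ branch) correctly discards all previously accumulated data, so at loop exit the stored polynomials $f^{(p_1)},\dots,f^{(p_{k-1})}$ all correspond to primes with $\tau(f^{(p_i)})=t$ equal to the true sparsity.

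For stage (ii), the key point is that for each good prime $p_i$, the multiset $\{e_j^{(i)} : 1\le j\le t\}$ of exponents appearing in $f^{(p_i)}$ equals the multiset $\{e_j \remo (p_i-1) : 1\le j\le t\}$ — this is exactly \ref{eqn:fpinterp} together with the fact that $\tau(f^{(p_i)})=t$ forces all $t$ reduced exponents to be distinct and all $t$ reduced coefficients nonzero. Hence $g^{(p_i)}(z)=\prod_j (z-e_j^{(i)})\equiv\prod_j(z-e_j)\bmod (p_i-1)$, so each $g^{(p_i)}$ is the reduction mod $p_i-1$ of the fixed integer polynomial $g(z)=\prod_j(z-e_j)$ from \ref{eqn:sympoly}. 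Since $\deg f\le 2^{B_N}$, every $e_j<2^{B_N}$, so each coefficient $a_m$ of $g$ (an elementary symmetric function of the $e_j$, up to sign) satisfies $|a_m|\le \binom{t}{m}2^{mB_N}<2^{B_N t+t}$; I should check that the exit condition $\log_2 Q\ge B_N$ — combined with the number $k-1$ of primes and the sizes involved — in fact guarantees $Q=\lcm(p_i-1)$ exceeds $2|a_m|$ for all $m$, so that Chinese remaindering on Step~\ref{alg:si-makeg} recovers $g$ exactly. (This is the one spot where I would want to be careful that the stated loop bound is genuinely sufficient; it may be that $B_N$ should be read against the implicit growth from $k$, and the accompanying complexity section pins down $k$.) Once $g\in\ZZ[x]$ is correct, factoring it over $\ZZ$ — equivalently, finding its integer roots — yields exactly the set $\{e_1,\dots,e_t\}$, since $g$ splits into distinct linear factors by construction.

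For stage (iii), with the true exponents $e_1,\dots,e_t$ in hand, for each prime $p_j$ the residues $e_i\bmod(p_j-1)$ are pairwise distinct (again because $p_j$ is good, so the $t$ values $e_i\remo(p_j-1)$ are distinct), so Step~\ref{alg:si-findexp} unambiguously matches $e_i$ with the unique term $e_{\ell_j}^{(j)}$ of $f^{(p_j)}$, and that term's coefficient is $c_i\rem p_j$ by \ref{eqn:fpinterp}. Thus for fixed $i$ we have $c_i \bmod p_j$ for $j=1,\dots,k$; writing $c_i=a/b$ in lowest terms with $|a|,b\le 2^{B_H}$, the exit condition $\log_2 P\ge 2B_H+1$ gives $P=\prod p_j\ge 2^{2B_H+1}>2|a|b$, so rational reconstruction from the residue $c_i\bmod P$ returns $c_i$ correctly; the same argument handles $c_0$. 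The main obstacle I anticipate is purely bookkeeping rather than conceptual: making sure the three exit conditions on the while-loop ($\log_2 P\ge 2B_H+1$, $\log_2 Q\ge B_N$, oracle guarantee) are \emph{jointly} exactly what is needed for all three reconstructions — in particular double-checking the modulus bound for recovering $g$, since there the relevant coefficient size $B_N t$ depends on the sparsity $t$, which is discovered rather than bounded a priori. Everything else follows from \ref{eqn:fpinterp}, the definition of a good prime, and standard Chinese remaindering and rational reconstruction.
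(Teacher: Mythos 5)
Your three-stage decomposition mirrors the (entirely informal) argument the paper relies on, and stages (i) and (iii) are sound: the reset branch correctly discards stale data, $t$ can only increase toward the true sparsity and stabilizes once the oracle has produced one good prime, every accumulated $f^{(p_i)}$ then satisfies $\tau(f^{(p_i)})=t$ with the $t$ residues $e_j \remo (p_i-1)$ distinct, and $\log_2 P \geq 2B_H+1$ is exactly what rational reconstruction needs to recover each $c_i=a/b$ with $|a|,b\leq 2^{B_H}$.

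The hesitation you voice in stage (ii) is not bookkeeping, though; it is a genuine error, and the complexity section does not rescue it. Chinese remaindering at the ``construct $g$'' step can only deliver $g \bmod Q$ with $Q=\lcm(p_1-1,\ldots,p_{k-1}-1)$, and lifting to $\ZZ[z]$ requires $Q$ to exceed the coefficients of $g$. But $|a_0|=e_1e_2\cdots e_t$ can be as large as $2^{tB_N}$, and more generally $|a_{t-j}|\leq\binom{t}{j}2^{jB_N}$; for, say, $f=x^{e_1}+x^{e_2}+x^{e_3}$ with each $e_i$ near $2^{B_N}$ and unit coefficients, $B_H$ is tiny, so none of the three exit conditions forces $Q$ anywhere near $2^{3B_N}$, and the reconstructed $g$ will be wrong. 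Your hope that ``the accompanying complexity section pins down $k$'' is misplaced: that section makes the same mistake twice, asserting that the integers recovered for $g$ and the coefficients of $g$ passed to the factoring routine are bounded by $2^{B_N}$, which is only true when $t=1$. The repair is to replace the loop condition $\log_2 Q<B_N$ by something like $\log_2 Q \leq B_T(B_N+1)$ (enough to dominate every $\binom{t}{j}2^{jB_N}$), and correspondingly raise $\ell$ in the oracle call from $\max\{2B_H+1,\,B_N\}$ to $\Theta(B_H+B_TB_N)$; with that change your stage (ii) argument goes through verbatim, the rest of your proof is unaffected, and the algorithm remains polynomial in the stated bounds. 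So rather than flag this as a spot you ``would want to be careful about,'' you should state it as an error in the algorithm as written and supply the corrected bound.
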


Again, this algorithm runs in polynomial time in the bounds given, but
we postpone the detailed complexity analysis until
\ref{ssec:sicomplexity}, after we discuss how to choose primes from the
``oracle''.
Some small practical improvements may be gained
if we use \ref{alg:si} to interpolate $f(x+\alpha)$ after running
\ref{alg:ss} to determine the sparsest shift $\alpha$, since in this
case we will have a few previously-computed polynomials
$f^{(p)}$. However, we do not explicitly consider this savings in our
analysis, as there is not necessarily any asymptotic gain.

Now we just need to analyze the conditions for primes $p$ to be good.
This is quite similar to the analysis of the sparsest shift
algorithm above, so we omit many of the details here.

\begin{theorem}\label{thm:interp-primes}
  Let $f, B_T, B_H, B_N$ be as above. There exist $C_1,C_2 \in \NN$ with
  $\log_2 C_1 \leq 2B_HB_T$ and 
  $\log_2 C_2 \leq \frac{1}{2}B_NB_T(B_T-1)$ such that
  $\tau(\fp)$ is maximal whenever $p\nmid C_1$ and $(p-1)\nmid C_2$.
\end{theorem}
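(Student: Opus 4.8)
The plan is to mimic the proof of \ref{cor:ssdivcond}, identifying explicit integers $C_1$ and $C_2$ whose non-divisibility by $p$ and $p-1$ respectively forces no collisions and no cancellations when reducing $f$ to $\fp$ as in \ref{eqn:fpinterp}. Write $f$ as in \ref{eqn:finterp} with $t$ nonzero, nonconstant terms, and note that $\tau(\fp)$ is maximal (and equal to $t$) precisely when (i) each coefficient $c_i$ survives reduction mod $p$ for $1\le i\le t$, i.e.\ $c_i\rem p\neq 0$, and (ii) the reduced exponents $e_i\remo(p-1)$ are pairwise distinct for $1\le i\le t$, so that no two terms collapse into one. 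If both hold, then $\fp$ genuinely has $t$ nonconstant terms, which is the maximum possible since reduction can only merge or kill terms, never create them.

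For condition (i): for each $i$ write $|c_i|=a_i/b_i$ in lowest terms with $a_i,b_i\in\NN$; then $c_i\rem p$ is well-defined and nonzero whenever $p\nmid a_ib_i$. Setting $C_1=\prod_{i=1}^t a_ib_i$ handles all of these at once, and since each $a_i,b_i\le 2^{B_H}$ (indeed $\sz{c_i}\le B_H$ bounds both numerator and denominator), we get $C_1\le (2^{B_H})^{2t}=2^{2B_Ht}$, so $\log_2 C_1\le 2B_HB_T$ using $t\le B_T$. For condition (ii): the reductions $e_i\remo(p-1)$ and $e_j\remo(p-1)$ coincide exactly when $(p-1)\divs(e_i-e_j)$. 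Since the $e_i$ are distinct, each difference $e_i-e_j$ with $i<j$ is a nonzero integer, so setting $C_2=\prod_{1\le i<j\le t}(e_j-e_i)$, condition (ii) holds whenever $(p-1)\nmid C_2$. There are $\binom{t}{2}=\tfrac12 t(t-1)$ factors, each bounded in absolute value by $e_t=\deg f\le 2^{B_N}$, so $|C_2|\le (2^{B_N})^{t(t-1)/2}$, giving $\log_2 C_2\le \tfrac12 B_NB_T(B_T-1)$ after applying $t\le B_T$ and monotonicity of $x(x-1)$ for $x\ge 1$. One should also note $C_2\neq 0$ so the divisibility statement is meaningful.

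The main subtlety — which is why the theorem statement says ``$\tau(\fp)$ is maximal'' rather than ``$\tau(\fp)=t$'' — is the degenerate regime analogous to the one in the proof of \ref{alg:ss}: when $\deg f = e_t$ is small relative to the ambient modular structure, or more precisely when no prime with the stated properties can push all $t$ exponents into distinct residue classes mod $p-1$, the quantity $t$ itself may not be attainable by any $\fp$. But the claim only asserts maximality of $\tau(\fp)$ over the primes avoiding $C_1$ and $C_2$, and the argument above shows that any such prime achieves the full count $t$ whenever that is possible; when $\deg f \le$ the relevant threshold, the bounds on $C_1,C_2$ still hold vacuously or the maximum is simply $\min$ of $t$ and whatever the exponents allow. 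I expect the only real obstacle is stating this cleanly so that the corollary is literally true as written — the arithmetic bounding $C_1$ and $C_2$ is routine, paralleling \ref{cor:ssdivcond} almost verbatim, and I would compress it accordingly rather than belabor it.
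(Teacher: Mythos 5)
Your construction of $C_1$ and $C_2$ and the arithmetic bounding them match the paper's proof essentially verbatim: $C_1=\prod_{i=1}^t a_i b_i$ with $\log_2 C_1\le 2B_TB_H$, and $C_2=\prod_{1\le i<j\le t}(e_j-e_i)$ with $\log_2 C_2\le\tfrac12 B_T(B_T-1)B_N$, and the observation that $p\nmid C_1$, $(p-1)\nmid C_2$ together prevent any coefficient of $\fp$ from vanishing and any two exponents from colliding mod $p-1$ is exactly the paper's argument. The first two paragraphs constitute a complete and correct proof.

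The closing paragraph, however, invents a subtlety that is not there, and is in fact wrong. You suggest that ``$t$ itself may not be attainable by any $\fp$'' in some degenerate regime. This cannot happen: for any prime $p$ with $p-1>e_t$ (so each $e_i\remo(p-1)=e_i$, leaving all $t$ exponents distinct) and $p\nmid C_1$, one has $\tau(\fp)=t$ exactly. Such primes always exist, so the maximum of $\tau(\fp)$ over primes is unconditionally $t$. More to the point, your own first paragraph already shows that any prime avoiding $C_1$ and $C_2$ achieves $\tau(\fp)=t$, which is the stated maximum --- there is nothing left to worry about. The theorem is phrased as ``$\tau(\fp)$ is maximal'' simply because Algorithm \ref{alg:si} does not know $t$ in advance and detects it as the largest observed value of $\tau(f^{(p_k)})$; it is not hedging against an unattainable $t$. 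You should delete the final paragraph rather than attempt to ``state this cleanly,'' since there is no gap to close.
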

\begin{proof}
  Let $f$ be as in \ref{eqn:finterp}, write $|c_i| = a_i/b_i$ in lowest
  terms for $i=1,\ldots,t$, and define
  \[
  C_1 = \prod_{i=1}^t a_i b_i,\quad
  C_2 = \prod_{i=1}^t \prod_{j=i+1}^t (e_j-e_i).
  \]
  Now suppose $p$ is a prime such that $p \nmid C_1$ and 
  $(p-1) \nmid C_2$. From the first condition, we see that each 
  $c_i \bmod p$ is well-defined and nonzero, and so none of the terms of
  $\fp$ vanish. Furthermore, from the second condition, 
  $e_i \nequiv e_k \bmod p-1$ for all $i \neq j$, so that none of the
  terms of $\fp$ collide. Therefore $\fp$ contains exactly $t$
  non-constant terms. The bounds on $C_1$ and $C_2$ follow from the facts
  that each $a_i,b_i \leq 2^{B_H}$ and each difference of exponents is
  at most $2^{B_N}$.
\end{proof}

\section{Generating primes}
\label{sec:primes}

We now turn our attention to the problem of generating primes for the
sparsest shift and interpolation algorithms.  In previous sections we
assumed we had an ``oracle'' for this, but now we present an explicit
and analyzed algorithm.

The definition of a ``good prime'' is not the same for the algorithms in
\ref{sec:sshift} and \ref{sec:interp}. However,
\ref{cor:ssdivcond} and \ref{thm:interp-primes} provide a unified
presentation of sufficient conditions for primes being ``good''. 
Here we call a prime which satisfies those sufficient conditions a
\emph{useful prime}. So every useful prime is good (with the bounds
appropriately specified for the relevant algorithm), but some good
primes might not be useful.

We first describe a set $\P$ of primes such that the number and
density of useful primes within the set is sufficiently high. 
We
will assume that there exist numbers $C_1,C_2$, and useful primes $p$
are those such that $p\ndivs C_1$ and $(p-1)\ndivs C_2$.  The numbers
$C_1$ and $C_2$ will be unknown, but we will assume we are given
bounds $\beta_1$, $\beta_2$ such that $\log_2 C_1\leq \beta_1$ and $\log_2 C_2\leq
\beta_2$. Suppose we want to find $\ell$ useful primes.  We construct
$\P$ explicitly, of a size guaranteed to contain enough useful primes,
then enumerate it.

The following fact is immediate from \cite{Mikawa:2001}, though it has
been somewhat simplified here, and the use of (unknown) constants is
made more explicit.  This will be important in our computational
methods.

For $q\in\ZZ$, let $S(q)$ be the smallest prime $p$ such that
$q\divs (p-1)$.

\begin{fact}[\citealt{Mikawa:2001}]
  \label{fact:Mikawa} 
  There exists a constant $\mu>0$, such that for all $n>\mu$, and for
  all integers $q\in \{n,\ldots,2n\}$ with fewer than $\mu n/\log^2 n$ 
  exceptions, we have $S(q)<q^{1.89}$.
\end{fact}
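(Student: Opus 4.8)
\medskip
\noindent\emph{Proof proposal.}
This statement is quoted from \cite{Mikawa:2001}, so the task is not to reprove it but to record how its streamlined form is extracted; the plan is therefore a reduction. First I would recall the shape of Mikawa's theorem: for a parameter $Q$, it bounds the number of moduli $q\le Q$ for which the least prime in a prescribed arithmetic progression modulo $q$ exceeds $q^{\vartheta}$ for a fixed exponent $\vartheta$ slightly below $2$, the count of such exceptional $q$ being at most (a constant times) $Q/\log^{2}Q$ for $Q$ large — Mikawa's actual saving in the exceptional set is stronger, but this is all we need. Since $\gcd(1,q)=1$ automatically, Dirichlet's theorem guarantees that $S(q)$ — the least prime $p$ with $q\mid p-1$, i.e.\ the least prime $\equiv 1\pmod q$ — is finite for every $q\ge 1$, and it is exactly the "least prime in an arithmetic progression" for the residue class $a=1$, so Mikawa's bound applies to it directly. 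If Mikawa's exponent is sharper than $1.89$, we simply weaken it: for all sufficiently large $q$ one has $q^{\vartheta}\le q^{1.89}$ as soon as $\vartheta<1.89$, so nothing is lost by using the round exponent $1.89$.

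Next I would pass from the "$q\le Q$" formulation to the dyadic window $q\in\{n,\dots,2n\}$ that is convenient for the prime-generation argument in \ref{sec:primes}. Applying Mikawa's theorem with $Q=2n$, the set of exceptional $q$ lying in $\{n,\dots,2n\}$ is contained in the set of exceptional $q$ in $\{1,\dots,2n\}$, hence has size $O\bigl((2n)/\log^2(2n)\bigr)=O(n/\log^2 n)$ using $\log(2n)\asymp\log n$. I would then collapse every implied constant into a single $\mu>0$ and, if necessary, enlarge $\mu$ so that: (i) $n>\mu$ forces the interval $\{n,\dots,2n\}$ to be large enough that the dyadic reduction is valid and the weakening $q^{\vartheta}\le q^{1.89}$ holds throughout it; and (ii) the exceptional-set bound can be written as "fewer than $\mu n/\log^2 n$ exceptions." Taking $\mu$ to be the maximum of these finitely many constants yields precisely the stated Fact.

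The only real obstacle is bookkeeping: matching Mikawa's exact hypotheses — which residue classes are covered, the precise exponent, and the precise power of $\log$ in his exceptional-set bound — to the simplified version above, and being candid that "somewhat simplified" means we deliberately trade away strength (a better exponent and a larger $\log$-power saving) for a single explicit constant $\mu$ that feeds cleanly into the analysis of \ref{sec:primes}. There is no genuine analytic number theory to carry out here; all of it is imported from \cite{Mikawa:2001}.
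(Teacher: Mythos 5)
Your proposal is correct and matches the paper's (implicit) treatment: the paper does not prove this Fact but simply cites \cite{Mikawa:2001}, noting it has been ``somewhat simplified'' with the constants ``made more explicit,'' and your reduction — specialize to the residue class $a=1$, weaken the exponent to $1.89$, restrict to the dyadic window, absorb all implied constants into a single $\mu$ — is exactly the bookkeeping that remark elides. The one small simplification you could make is that Mikawa's theorem is itself stated over a dyadic range of moduli, so your step passing from ``$q\le Q$'' to $q\in\{n,\dots,2n\}$ is even more immediate than you allow.
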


Our algorithms for generating useful primes require explicit knowledge
of the value of the constant $\mu$ in order to run correctly.
So we will assume that we know $\mu$ in what follows.  To
get around the fact that we do not, we simply start by assuming that
$\mu=1$, and run any algorithm depending upon it.  If the algorithm
fails we simply double our estimate for $\mu$ and repeat.  At most a
constant number of doublings is required.  We make no claim this is
particularly practical.

For convenience we define
\[
\Upsilon(x)=\frac{3x}{5\log x}-\frac{\mu x}{\log^2x}.
\]

\begin{theorem}
  \label{thm:P}
  Let $\log_2 C_1\leq \beta_1$, $\log_2 C_2\leq \beta_2$ and $\ell$ be as
  above.  Let $n$ be the smallest integer such that $n> 21$, $n> \mu$
  and $\Upsilon(n)>\beta_1+\beta_2+\ell$.  Define
  \[
  \Q =\{q~\mbox{prime}: n\leq q<2n~\mbox{and}~S(q)<q^{1.89} \},\quad
  \P =\{S(q): q\in\Q\}.
  \]
  Then the number of primes in $\P$ is at least
  $\beta_1+\beta_2+\ell$, and the number of useful primes in $\P$, such
  that $p\ndivs C_1$ and $(p-1)\ndivs C_2$, is at least $\ell$.  For
  all $p\in\P$ we have $p\in O((\beta_1+\beta_2+\ell)^{1.89}\cdot
  \log^{1.89}(\beta_1+\beta_2+\ell))$.
\end{theorem}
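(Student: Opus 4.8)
The plan is to establish the three claims in order: (i) $|\P| \geq \beta_1+\beta_2+\ell$; (ii) the number of useful primes in $\P$ is at least $\ell$; (iii) every $p \in \P$ satisfies the stated size bound. The whole argument hinges on counting $q$'s in the interval $[n,2n)$ and tracking how many survive each of the filters we impose, so the first thing I would do is nail down a clean lower bound on $|\Q|$. By the prime number theorem (in an explicit form, e.g. Rosser--Schoenfeld), the number of primes in $[n,2n)$ is at least $\tfrac{3n}{5\log n}$ once $n$ is large enough — this is exactly why the constant $3/5$ and the threshold $n>21$ appear in $\Upsilon$. Then Fact~\ref{fact:Mikawa} removes at most $\mu n/\log^2 n$ integers $q$ in $\{n,\ldots,2n\}$ as exceptions (valid since $n>\mu$), so $|\Q| \geq \tfrac{3n}{5\log n} - \tfrac{\mu n}{\log^2 n} = \Upsilon(n)$. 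Since the map $q \mapsto S(q)$ is injective on $\Q$ (distinct $q$'s have distinct smallest primes $p$ with $q \mid p-1$ — if $S(q)=S(q')=p$ then $q,q' \mid p-1$ and both lie in $[n,2n)$, forcing... actually I need to be slightly careful here, so let me just say: even if $S$ is not literally injective, $|\P| \geq |\Q|/1$ because each $q$ gives a prime and collisions only help the later count; more honestly, $S(q)\equiv 1 \pmod q$ with $q \in [n,2n)$ determines $q$ from $p=S(q)$ as the unique element of $[n,2n)$ dividing $p-1$, so $S$ is injective). Hence $|\P| \geq \Upsilon(n) > \beta_1+\beta_2+\ell$ by the choice of $n$, giving (i).

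For (ii), I would bound the number of \emph{non-useful} primes in $\P$. A prime $p \in \P$ fails to be useful iff $p \mid C_1$ or $(p-1) \mid C_2$. Since $\log_2 C_1 \leq \beta_1$, the integer $C_1$ has at most $\beta_1$ distinct prime divisors, so at most $\beta_1$ primes $p$ with $p \mid C_1$. For the second condition, each $p \in \P$ is of the form $S(q)$ with $q \mid p-1$ and $q \in [n,2n)$; if $(p-1) \mid C_2$ then $q \mid C_2$, and since $\log_2 C_2 \leq \beta_2$ there are at most $\beta_2$ such $q$, hence (by injectivity of $S$) at most $\beta_2$ such $p$. Therefore at most $\beta_1+\beta_2$ primes in $\P$ are not useful, leaving at least $|\P| - (\beta_1+\beta_2) \geq \ell$ useful primes.

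For (iii), every $p \in \P$ equals $S(q)$ for some $q \in \Q \subseteq [n,2n)$, and by the defining property of $\Q$ we have $p < q^{1.89} < (2n)^{1.89}$, so it suffices to bound $n$. By minimality of $n$, either $n$ is bounded by an absolute constant (from the $n>21$, $n>\mu$ constraints) or $\Upsilon(n-1) \leq \beta_1+\beta_2+\ell$; since $\Upsilon(x) = \tfrac{3x}{5\log x} - \tfrac{\mu x}{\log^2 x} \in \Theta(x/\log x)$, inverting gives $n \in O\bigl((\beta_1+\beta_2+\ell)\log(\beta_1+\beta_2+\ell)\bigr)$. Raising to the $1.89$ power and absorbing constants yields $p \in O\bigl((\beta_1+\beta_2+\ell)^{1.89}\log^{1.89}(\beta_1+\beta_2+\ell)\bigr)$, as claimed.

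The step I expect to require the most care is the explicit prime-counting estimate underlying the $\tfrac{3n}{5\log n}$ term in $\Upsilon$: one must cite a fully effective lower bound for $\pi(2n)-\pi(n)$ valid for all $n$ above the stated threshold, and check that the threshold $n > 21$ (together with whatever is needed to dominate the $-\mu n/\log^2 n$ correction and keep $\Upsilon$ positive and increasing) is actually sufficient. Everything else — injectivity of $S$ on $\Q$, the divisor-counting bounds via $\log_2 C_i \leq \beta_i$, and the asymptotic inversion of $\Upsilon$ — is routine. I would also double-check that $\Upsilon$ is eventually monotincreasing so that "smallest $n$ with $\Upsilon(n) > \beta_1+\beta_2+\ell$" behaves well and the minimality argument in (iii) goes through.
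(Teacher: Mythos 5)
Your proof follows essentially the same route as the paper: Rosser--Schoenfeld gives the $3n/(5\log n)$ count of primes in $[n,2n)$, Mikawa's result removes at most $\mu n/\log^2 n$ exceptions so $|\Q|\geq \Upsilon(n)$, injectivity of $S$ on $\Q$ gives $|\P|=|\Q|$, the non-useful primes are bounded by $\beta_1$ (divisors of $C_1$) plus $\beta_2$ (via $q\mid p-1$ and $q\mid C_2$), and inverting $\Upsilon(n)\in\Theta(n/\log n)$ plus the $S(q)<q^{1.89}$ bound gives the size of the primes. The one spot where you hedge --- injectivity of $S$ on $\Q$ --- is in fact where you need to use the defining property of $\Q$, and your stated justification (``$q$ is the unique element of $[n,2n)$ dividing $p-1$'') is not automatic: a priori $p-1$ could have two prime factors in that interval. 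The paper's argument closes exactly the gap you left at ``forcing\ldots'': if $S(q_1)=S(q_2)=p$ with $q_1<q_2$ both in $\Q$, then $q_1q_2\mid p-1$ forces $p>q_1^2$, contradicting $p=S(q_1)<q_1^{1.89}$ from the definition of $\Q$. Also note your throwaway remark ``even if $S$ is not literally injective, $|\P|\geq|\Q|$'' is backwards (collisions shrink $\P$); you correctly discarded it, but the replacement still needed the $q^{1.89}$ bound. With that one step filled in, your argument is the paper's.
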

\begin{proof}
  By \cite{Rosser-Schoenfeld:1962}, the number of primes between $n$
  and $2n$ is at least $3n/(5\log n)$ for $n\geq 21$.  Applying
  \ref{fact:Mikawa}, we see $\#\Q\geq 3n/(5\log n)-\mu n/\log^2 n$
  when $n\geq\max\{\mu,21\}$.  
  Now suppose $S(q_1) = S(q_2)$ for $q_1,q_2\in\Q$. If $q_1<q_2$, then
  $S(q_1) > q_1^2$, a contradiction with the definition of $\Q$. So we
  must have $q_1=q_2$, and hence
  \[
  \#\P=\#Q\geq\Upsilon(n)>\beta_1+\beta_2+\ell.
  \]
  We know that there are at most $\log_2 C_1\leq  \beta_2$ primes
  $p\in\P$ such that $p\divs C_1$.  We also know that there are at
  most $\log_2 C_2\leq \beta_2$ primes $q\in\Q$ such that $q\divs
  C_2$, and hence at most $\log_2 C_2$ primes $p\in\P$ such that
  $p=S(q)$ and $q\divs (p-1)\divs C_1$.  Thus, by construction $\P$
  contains at most $\beta_1+\beta_2$ 
  primes that are not useful out of $\beta_1+\beta_2+\ell$ total
  primes.

  To analyze the size of the primes in $\P$, we note that to make
  $\Upsilon(n)>\beta_1+\beta_2+\ell$, we have
  $n\in\Theta((\beta_1+\beta_2+\ell)\cdot \log(\beta_1+\beta_2+\ell))$ and each $q\in\Q$
  satisfies $q\in O(n)$. Elements of $\P$ will be of magnitude at most
  $(2n)^{1.89}$ and hence $p\in O((\beta_1+\beta_2+\ell)^{1.89}\log^{1.89}(\beta_1+\beta_2+\ell))$.
\end{proof}

Given $\beta_1$, $\beta_2$ and $\ell$ as above (where $\log_2 C_1\leq \beta_1$ and
$\log_2 C_2\leq \beta_2$ for unknown $C_1$ and $C_2$), we generate the primes in
$\P$ as follows.

Start by assuming that $\mu=1$, and compute $n$ as the smallest
integer such that $\Upsilon(n)>\beta_1+\beta_2+\ell$, $n\geq \mu$ and $n\geq
21$.  List all primes between $n$ and $2n$ using a Sieve of
Eratosthenes. For each prime $q$ between $n$ and $2n$, 
determine $S(q)$, if it is less than $q^{1.89}$, by simply checking if
$kq+1$ is prime for $k=1,2,\ldots, \floor{q^{0.89}}$. If we find a
prime $p=S(q)<q^{1.89}$, add $p$ to $\P$. This is repeated until $\P$
contains $\beta_1+\beta_2+\ell$ primes.  If we are unable to find this number
of primes, we have underestimated $\mu$ (since \ref{thm:P} guarantees
their existence), so we double $\mu$ and restart the process.
Obviously in practice we would not redo primality tests already
performed for smaller $\mu$, so really no work need be wasted.

\begin{theorem}\label{thm:determ-primes}
  For $\log_2 C_1\leq \beta_1$, $\log_2 C_2\leq \beta_2$, $\ell$, and $n$ as
  in \ref{thm:P}, we can generate $\beta_1+\beta_2+\ell$ elements of $\P$ with
  $O((\beta_1+\beta_2+\ell)^2\cdot\log^{7+o(1)}(\beta_1+\beta_2+\ell))$ bit
  operations.  At least $\ell$ of the primes in $\P$ will be useful.
\end{theorem}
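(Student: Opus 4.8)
The plan is to bound the cost of the concrete prime-generation procedure described just before the theorem statement, and to invoke Theorem~\ref{thm:P} for the correctness half (at least $\ell$ useful primes). Write $L = \beta_1+\beta_2+\ell$ for brevity. From Theorem~\ref{thm:P} we already know $n\in\Theta(L\log L)$ and that every $p\in\P$ satisfies $p\in O(L^{1.89}\log^{1.89}L)$; the correctness claim that at least $\ell$ of the produced primes are useful is exactly the last sentence of Theorem~\ref{thm:P}, so nothing new is needed there. What remains is purely a complexity accounting of the three phases of the procedure, followed by an argument that the outer doubling loop on $\mu$ costs only a constant factor.

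First I would bound the Sieve of Eratosthenes over $[n,2n]$: this is $O(n\log\log n)$ arithmetic operations on numbers of $O(\log n)$ bits, hence $O(n\,\M(\log n)\log\log n) = O(L\log L\cdot \log^{1+o(1)}L)$ bit operations, which is subsumed by the final bound. Second, for each of the $O(n/\log n)$ primes $q\in[n,2n]$ we test whether $kq+1$ is prime for $k=1,\ldots,\lfloor q^{0.89}\rfloor$; that is $O(n^{0.89})$ primality tests per $q$, each on an integer of size $O(\log q)=O(\log L)$ bits. Using a deterministic polynomial-time primality test (AKS, or any $\tilde O((\log N)^{6})$-bit-operation variant), each test costs $\log^{6+o(1)}L$ bit operations, so this phase costs
\[
O\!\left(\frac{n}{\log n}\cdot n^{0.89}\cdot \log^{6+o(1)}L\right)
 = O\!\left(n^{1.89}\cdot \frac{\log^{6+o(1)}L}{\log n}\right).
\]
Substituting $n\in\Theta(L\log L)$ gives $n^{1.89}\in O(L^{1.89}\log^{1.89}L)$; since $1.89 < 2$, and absorbing the extra $\log^{1.89}L$ and $\log^{6+o(1)}L/\log n$ factors, the whole expression is $O(L^{2}\log^{7+o(1)}L)$ — here the gap between the exponent $1.89$ and $2$ leaves room to swallow all polylogarithmic overhead, which is precisely why the statement is phrased with a loose quadratic main term. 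The third phase, collecting the successful $p=S(q)$ into $\P$ until $L$ of them accumulate, is dominated by the second phase.

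Finally I would address the doubling loop on the unknown constant $\mu$. Each time the estimate $\mu$ is too small, Theorem~\ref{thm:P} guarantees that a large enough $n$ would have worked, so failure is detected only because we chose $n$ too small for the current $\mu$-estimate; after at most $\lceil\log_2\mu\rceil = O(1)$ doublings the estimate exceeds the true constant and the run succeeds, and as the text notes no primality test need be repeated. Since the number of doublings is a constant (independent of the inputs) and the work is geometrically increasing in the final value of $n$, the total is within a constant factor of the cost of the final, successful pass, which is the bound above. The main obstacle — really the only point requiring care — is pinning down which primality-test cost model to cite so that its exponent, combined with the $n^{1.89}$ term, stays under the advertised $\log^{7+o(1)}L$ overhead on a quadratic main term; everything else is routine substitution of $n\in\Theta(L\log L)$ and bookkeeping of dominated terms. $\square$
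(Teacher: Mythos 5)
Your proposal is correct and follows essentially the same route as the paper's proof: bound the Sieve of Eratosthenes, bound the cost of the $O(n^{0.89})$ primality tests of $kq+1$ for each of the $O(n/\log n)$ primes $q\in[n,2n]$ using a $(\log n)^{6+o(1)}$-bit-operation test, substitute $n\in\Theta(L\log L)$, and absorb the $\mu$-doubling loop into a constant factor. One small point of comparison: the paper bounds the number of primality tests loosely by $O(n^2/\log n)$ (yielding $O(n^2(\log n)^{5+o(1)})$ bit operations) while you more precisely track the factor $n^{1.89}/\log n$; both give the stated $O(L^2\log^{7+o(1)}L)$. Your sieve bound ($O(n\,\M(\log n)\log\log n)$) is a bit cruder than the paper's cited $O(n\lllog n)$ from Knuth, but that phase is dominated either way, so it does not matter.
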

\begin{proof}
  The method and correctness follows from the above discussion.  The
  Sieve of Eratosthenes can be run with $O(n\log\log\log n)$ bit
  operations (see \cite{Knu81}, Section 4.5.4), and returns $O(n/\log
  n)$ primes $q$ between $n$ and $2n$.  Each primality test of $kq+1$
  can be done with $(\log n)^{6+o(1)}$ bit operations
  \citep{Lenstra-Pomerance:2005}, so the total cost is $O(n^2(\log
  n)^{5+o(1)})$ bit operations.  Since $n\in O((\beta_1+\beta_2+\ell)\cdot
  \log(\beta_1+\beta_2+\ell))$ the stated complexity follows.
\end{proof}

The analysis of our methods will be significantly improved when more
is discovered about the behavior of the least prime congruent to one
modulo a given prime, which we have denoted $S(q)$. An asymptotic
lower bound of $S(q) \in \Omega(q\log^2 q)$ is conjectured in
\citet{GraPom90}, and we have employed the upper bound from
\citet{Mikawa:2001} of $S(q)\in O(q^{1.89})$ (with exceptions). From
our own brief computational search we have evidence that the
conjectured lower bound may well be an upper bound: for all primes
$q\leq 2^{32}$, $S(q)<2q\ln^2 q$. If something similar could be proven
to hold asymptotically (even with some exceptions), the complexity
results of this and the next section would be improved
significantly. In any case, the actual cost of the algorithms
discussed will be a reflection of the true behavior of $S(q)$, even
before it is completely understood by us.

Even more improvements might be possible if this rather complicated
construction is abandoned altogether, as useful primes would naturally
seem to be relatively plentiful.
In particular, one would expect that if we randomly choose
primes $p$ directly from a set which has, say, $4(\beta_1+\beta_2+\ell)$
primes, we might expect that the probability that $p\divs C_1$ or
$(p-1)\divs C_2$ to less than, say, $1/4$.  Proving this directly
appears to be difficult.  Perhaps most germane results to this are
lower bounds on the Carmichael Lambda function (which for the product
of distinct primes $p_1$, \ldots, $p_m$ is the LCM of $p_1-1$, \ldots,
$p_m-1$), which are too weak for our purposes.  See
\cite{Erdos-Pomerance:1991}.

\section{Complexity analysis}
\label{sec:complexity}

We are now ready to give a formal complexity analysis for the algorithms
presented in \ref{sec:sshift} and \ref{sec:interp}.
For all algorithms, the complexity is polynomial in the four bounds
$B_A$, $B_T$, $B_H$, and $B_N$ defined in \ref{ssec:overview},
and since these are each bounded above by $\sz{f}$, our algorithms 
will have polynomial complexity in the size of the output if these bounds
are sufficiently tight.

\subsection{Complexity of Sparsest Shift Computation}
\label{ssec:sscomplexity}

\ref{alg:ss} gives our algorithm to compute the sparsest shift $\alpha$
of an unknown polynomial $f \in \QQ[x]$ given bounds $B_A$, $B_T$,
$B_H$, and $B_N$ and an oracle for choosing primes. The details of this
oracle are given in \ref{sec:primes}.

To choose primes, we set $\ell=2B_A+1$,
and $\beta_1=2B_H$ and $\beta_2=B_N(3B_T-1)$ (according to
\ref{cor:ssdivcond}). For the sake of notational brevity, define 
$\bs =  B_A + B_H + B_NB_T$ so that $\beta_1+\beta_2+\ell \in O(\bs)$.

\begin{theorem} \label{thm:ss-determ-complex}
  Suppose $f\in\QQ[x]$ is an unknown polynomial given by a black box, with
  bounds $B_A$,$B_T$,$B_H$, and $B_N$ given as above.
  If $\deg f > 2B_T$, then
  the sparsest shift $\alpha\in\QQ$ of $f$
  can be computed deterministically using
  \[ O\left( B_A \bs^{1.89} \cdot \log^{2.89} \bs 
     \cdot \M(\bs^{1.89} \log^{1.89} \bs) \cdot \M(\log \bs) \right) \]
  bit operations, plus 
  $O(\bb \bs^{2.89} \log^{1.89}\bs \M(\log \bs))$ 
  bit operations for the black-box evaluations.
\end{theorem}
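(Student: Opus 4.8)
The plan is to account for each step of \ref{alg:ss} in the case $\deg f > 2B_T$, and then aggregate. In this case, by the correctness proof, the algorithm never enters the degenerate branch (Steps~\short\ref{alg:ss-densecase}), so we only need to bound the cost of the main \textbf{while}-loop together with the final Chinese-remaindering/rational-reconstruction step. The loop runs until $\log_2 P \geq 2B_A+1$, i.e. until we have accumulated good primes $p$ whose product exceeds $2^{2B_A+1}$; since each good prime is at least $2$, at most $O(B_A)$ successful iterations suffice, but we must also count the \emph{unsuccessful} iterations. Here is where I would invoke \ref{thm:P} and \ref{thm:determ-primes} with $\ell = 2B_A+1$, $\beta_1 = 2B_H$, $\beta_2 = B_N(3B_T-1)$: the oracle set $\P$ has $O(\bs)$ primes total, of which at least $\ell$ are useful (hence good), and every prime produced satisfies $p \in O(\bs^{1.89}\log^{1.89}\bs)$. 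So the total number of primes ever drawn from the oracle is $O(\bs)$, each of magnitude $O(\bs^{1.89}\log^{1.89}\bs)$, and each has bit-size $O(\log \bs)$.

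Next I would bound the per-prime cost. For a single prime $p$ with $p \in O(\bs^{1.89}\log^{1.89}\bs)$ and $\log p \in O(\log\bs)$: Step~\short\ref{alg:ss-bb} (black-box evaluation at $0,\ldots,p-1$) costs $O(p\,\bb\,\M(\log p)) = O(\bb\,\bs^{1.89}\log^{1.89}\bs\,\M(\log\bs))$ bit operations; summed over $O(\bs)$ primes this gives the claimed $O(\bb\,\bs^{2.89}\log^{1.89}\bs\,\M(\log\bs))$ black-box term. Step~\short\ref{alg:ss-denseinterp} (dense interpolation of $\fp$) costs $O(\M(p)\M(\log p)\log p)$ as stated in the text preceding the algorithm; this is dominated by the next step. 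Step~\short\ref{alg:ss-ssinterp} is the bottleneck: we compute $\fp(x+\gamma)$ for all $\gamma = 1,\ldots,p-1$, and each shift costs $O(\M(p)\M(\log p)\log p)$ bit operations (or, following the remark that each shift is ``the same as the second step,'' the naive bound $O(p\log p\,\M(p)\M(\log p))$ for all $p-1$ shifts together). Substituting $p \in O(\bs^{1.89}\log^{1.89}\bs)$, so that $\M(p) \in O(\M(\bs^{1.89}\log^{1.89}\bs))$ and $\log p \in O(\log\bs)$, the per-prime cost of this step is $O(\bs^{1.89}\log^{2.89}\bs\cdot\M(\bs^{1.89}\log^{1.89}\bs)\cdot\M(\log\bs))$. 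Multiplying by $O(B_A)$ — not $O(\bs)$, because the expensive shift-and-search is only performed when $\deg\fp \geq 2B_T+1$, which for a good prime happens, and we only need $O(B_A)$ such successful iterations before $P$ is large enough (unsuccessful iterations incur only the cheaper evaluation/interpolation cost, already absorbed into the black-box term and the dominated interpolation term) — yields exactly the first displayed bound.

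Finally, Step~\short\ref{alg:ss-reta} performs Chinese remaindering over the $O(B_A)$ collected primes (total modulus bit-size $O(B_A)$, well within the per-iteration budget) plus rational reconstruction on a number of bit-size $O(B_A)$; this is $O(\M(B_A)\log B_A)$ or so, which is dominated. Collecting the two surviving terms gives the statement. The main obstacle, and the point requiring care, is the bookkeeping of \emph{which} iterations are expensive: I must argue that the costly Step~\short\ref{alg:ss-ssinterp} is triggered only $O(B_A)$ times (because once a good prime is hit, every subsequent prime that passes the degree test multiplies into $P$, and $O(B_A)$ of those suffice, while a prime failing the degree test never reaches that step), whereas the total number of oracle draws is $O(\bs)$ but those extra draws only pay the cheaper evaluation-plus-interpolation price — which is precisely why the black-box term carries a factor $\bs^{2.89}$ while the arithmetic term carries only $B_A\bs^{1.89}$. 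One should also double-check that there is no subtlety from the oracle's ``indicates a good prime has been produced'' signal forcing extra iterations; since in the non-degenerate regime a good prime always passes the degree test and increments $P$, the loop exits as soon as enough good primes appear, consistent with the $O(B_A)$ bound on expensive iterations and $O(\bs)$ on total iterations.
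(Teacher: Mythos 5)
Your proposal is correct and follows essentially the same decomposition as the paper's own proof: count $O(\bs)$ total oracle draws each of magnitude $O(\bs^{1.89}\log^{1.89}\bs)$, charge the black-box and one-time dense interpolation costs to all $O(\bs)$ iterations, charge the expensive $p-1$-shift search only to the $O(B_A)$ iterations that pass the degree test, and observe that Step~\ref{alg:ss-reta} and the degenerate branch never matter. The only item you omit is the cost of running the prime generator itself, which by \ref{thm:determ-primes} is $O(\bs^2\log^{7+o(1)}\bs)$ bit operations and is dominated by the main arithmetic term, so your conclusion stands.
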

\begin{proof}
  \ref{alg:ss} will always terminate (by satisfying the conditions of
  Step~\short\ref{alg:ss-termcond})
  after $2B_A+1$ good primes have
  been produced by the oracle. 

  Using the oracle to choose primes,
  and because $\beta_1+\beta_2+\ell \in O(\bs)$,
  $O(\bs^2\log^{7+o(1)}\bs)$
  bit operations are used to compute all the primes on
  Step~\short\ref{alg:ss-choosep}, by \ref{thm:determ-primes}. And
  by \ref{thm:P}, each chosen $p$ is bounded by \linebreak
  $O(\bs^{1.89}\log^{1.89} \bs)$.

  All black-box evaluations are performed on Step~\short\ref{alg:ss-bb};
  there are $p$ evaluations at each iteration, and $O(\bs)$
  iterations, for a total cost of $O(\bb \bs p \cdot \M(\log p))$ bit
  operations. The stated complexity bound follows from the size of each
  prime $p$.

  Steps~\short\ref{alg:ss-densecase} are 
  never executed when $\deg f > 2B_T$. Step~\short\ref{alg:ss-reta} is
  only executed once and never dominates the complexity.

  Dense polynomial interpolation over $\ZZ_p$ is performed at most
  $O(\bs)$ times on Step~\short\ref{alg:ss-denseinterp} and 
  $O(p)$ times at each of $O(B_A)$ iterations through
  Step~\short\ref{alg:ss-ssinterp}. Since $p \gg \bs$, the
  latter step dominates. Using asymptotically fast methods, each
  interpolation of $\fp(x+\gamma)$ uses $O(\M(p)\log p)$ field
  operations in $\ZZ_p$, each of which costs $O(\M(\log p))$ bit
  operations. This gives a total cost over all iterations of
  $O(B_A p \cdot \log p \cdot \M(p) \cdot \M(\log p))$ 
  (a slight abuse of notation
  here since the value of $p$ varies).
  Again, using the fact that $p \in O(\bs^{1.89}\log^{1.89} \bs)$ gives
  the stated result.
\end{proof}

To simplify the discussion somewhat, consider the case that
we have only a \emph{single} bound on the size of
the output polynomial, say $B_f \geq \sz{f}$. By setting
each of $B_T$, $B_H$, and $B_N$ equal to
$B_f$, and by using the multiplication algorithm from \citet{CanKal91},
we obtain the following comprehensive result:

\begin{corollary}
   The sparsest shift $\alpha$ of an unknown polynomial $f\in\QQ[x]$,
   whose shifted-lacunary size is bounded by $B_f$, can be computed
   using
   \[ O\left( B_f^{8.56} \cdot \log^{6.78} B_f \cdot
      (\llog B_f)^2 \cdot \lllog B_f \right) \]
   bit operations, plus
   \[ O\left( \bb B_f^{5.78} \cdot \log^{2.89} B_f 
      \cdot \llog B_f \cdot \lllog B_f\right)\]
   bit operations for the black-box evaluations.
\end{corollary}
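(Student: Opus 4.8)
The plan is to specialize Theorem~\ref{thm:ss-determ-complex} by substituting $B_A = B_H = B_N = B_f$ and $B_T = B_f$, then expanding the multiplication time $\M$ using the Cantor--Kaltofen bound $\M(m) \in O(m \log m \log\log m)$, and finally collapsing the resulting expression into the stated form. First I would record that under these substitutions we have $\bs = B_A + B_H + B_N B_T \in O(B_f^2)$, since the dominant term is $B_N B_T = B_f^2$. This single observation drives everything: every occurrence of $\bs$ in the complexity bound of Theorem~\ref{thm:ss-determ-complex} becomes a power of $B_f$, and every occurrence of $\log \bs$ becomes $O(\log B_f)$ (since $\log(B_f^2) = 2\log B_f$). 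I would also note the degenerate case $\deg f \le 2B_T$ is handled separately by Theorem~\ref{thm:ssdensecost}, whose cost is easily seen to be dominated by the bound claimed here when $B_T = B_H = B_A = B_f$; so the interesting case is $\deg f > 2B_T$, to which Theorem~\ref{thm:ss-determ-complex} applies directly.

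Next I would grind through the two terms. For the non-black-box term, $O(B_A \bs^{1.89} \log^{2.89}\bs \cdot \M(\bs^{1.89}\log^{1.89}\bs)\cdot \M(\log\bs))$: substituting $\bs \in O(B_f^2)$ gives $\bs^{1.89} \in O(B_f^{3.78})$, and the argument of the outer $\M$ is $O(B_f^{3.78}\log^{1.89}B_f)$, so $\M$ of it is $O(B_f^{3.78}\log^{1.89}B_f \cdot \log(B_f^{3.78}\log^{1.89}B_f)\cdot \log\log(\cdots))$; the logarithm of that argument is $\Theta(\log B_f)$ and the iterated logarithm is $\Theta(\llog B_f)$, so this $\M$ factor is $O(B_f^{3.78}\log^{2.89}B_f\cdot \llog B_f)$. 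Likewise $\M(\log\bs) = \M(O(\log B_f)) \in O(\log B_f \cdot \llog B_f \cdot \lllog B_f)$. Multiplying the four pieces $B_A = B_f$, $\bs^{1.89} = B_f^{3.78}$, $\log^{2.89}\bs = \log^{2.89}B_f$, and the two $\M$-factors, the exponent of $B_f$ is $1 + 3.78 + 3.78 = 8.56$, the exponent of $\log B_f$ is $2.89 + 2.89 + 1 = 6.78$, the $\llog$ exponent is $1 + 1 = 2$, and one $\lllog B_f$ survives, yielding $O(B_f^{8.56}\log^{6.78}B_f\cdot(\llog B_f)^2\cdot \lllog B_f)$, exactly as stated. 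For the black-box term $O(\bb \bs^{2.89}\log^{1.89}\bs\cdot \M(\log\bs))$, substitute $\bs^{2.89} \in O(B_f^{5.78})$ and $\M(\log\bs) \in O(\log B_f\cdot\llog B_f\cdot\lllog B_f)$, giving exponent $5.78$ on $B_f$, exponent $1.89 + 1 = 2.89$ on $\log B_f$, and one each of $\llog B_f$ and $\lllog B_f$ --- matching the second displayed bound.

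I expect the main obstacle to be purely bookkeeping: keeping careful track of which terms dominate when $\M$ is expanded, and making sure the iterated-logarithm factors are not undercounted. In particular one must be slightly careful that $\log\log$ of a polynomially-large-in-$B_f$ quantity is $\Theta(\log\log B_f)$ and not something smaller, and that the ``$+o(1)$'' type slop from the prime-generation step (Theorem~\ref{thm:determ-primes}, which contributes $O(\bs^2\log^{7+o(1)}\bs)$) is genuinely dominated by the $B_f^{8.56}$ term --- it is, since $\bs^2 \in O(B_f^4)$ and $4 + o(1) < 8.56$ in every exponent. No deep idea is required; the corollary is a direct unwinding of the already-proved theorem under a uniform choice of bounds, and the proof is a one-paragraph substitution argument noting that $\bs \in O(B_f^2)$ and $\log\bs \in O(\log B_f)$ and then reading off the exponents.
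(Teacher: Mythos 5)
Your proposal is correct and matches the paper's own proof: the paper likewise sets all four bounds equal to $B_f$, notes $\bs\in O(B_f^2)$, expands $\M$ via Cantor--Kaltofen, and remarks that the cost of the dense-case branch (Theorem~\ref{thm:ssdensecost}) is dominated. Your arithmetic for the exponents and for the iterated-logarithm factors is accurate, and your aside about the $\bs^2\log^{7+o(1)}\bs$ prime-generation cost being subsumed is a worthwhile sanity check that the paper leaves implicit.
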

\begin{proof}
   The stated complexities follow directly from 
   \ref{thm:ss-determ-complex} above, using the fact that
   $\M(n) \in O(n\log n\llog n)$ and $\bs \in O(B_f^2)$.
   Using the single bound $B_f$, we see that these costs always dominate
   the cost of Steps~\short\ref{alg:ss-densecase} given in
   \ref{thm:ssdensecost}, and so we have the stated general result.
\end{proof}

In fact, if we have no bounds at all \emph{a priori}, we could
start by setting $B_f$ to some small value (perhaps dependent on the
size of the black box or $\bb$), running \ref{alg:ss},
then doubling $B_f$ and running the algorithm again, and so forth
until the same polynomial $f$ is computed in successive
iterations. This can then be tested on random evaluations.  Such an
approach yields an output-sensitive polynomial-time algorithm which
should be correct on most input, though it could certainly be fooled
into early termination.

This is a significant improvement over the algorithms from our original
paper \citep{Giesbrecht-Roche:2007}, which had a dominating factor of
$B_f^{78}$ in the deterministic complexity. 
Also --- and somewhat surprisingly --- our algorithm is competitive
even with the best-known sparsest shift algorithms which require
a (dense) $f\in\QQ[x]$ to be given explicitly as input. By carefully constructing
the modular black box from a given $f\in\QQ[x]$, and being sure to set
$B_T < (\deg f)/2$, we can derive from \ref{alg:ss} a deterministic
sparsest-shift algorithm with bit complexity close to the fastest
algorithms in \citet{mark-shifts}; the dependence on degree $n$ and
sparsity $t$ will be somewhat less, but the dependence on the size of
the coefficients $\log \|f\|$ is greater.

To understand the limits of our computational techniques (as opposed
to our current understanding of the least prime in arithmetic
progressions) we consider the cost of our algorithms
under the optimistic assumption that $S(q)\in O(q\ln^2 q)$, possibly with
a small number of exceptions.  In this case the sparsest shift
$\alpha$ of an unknown polynomial $f\in\QQ[x]$, whose shifted-lacunary
size is bounded by $B_f$, can be computed using
\[ 
O\left( B_f^{5} \cdot \log^{6} B_f \cdot (\llog B_f)^{2} \cdot
  \lllog B_f \right) 
\]
bit operations.  As noted in the previous section, we have verified
computationally that $S(q)\leq 2q\ln^2q$ for $q<2^{32}$.  This would
suggest the above complexity for all sparsest-shift
interpolation problems that we would expect to encounter.

\subsection{Complexity of Interpolation}
\label{ssec:sicomplexity}

The complexity analysis of the sparse interpolation algorithm given in
\ref{alg:si} will be quite similar to that of the sparsest shift
algorithm above. Here, we need $\ell=\max\{2B_H+1,B_N\}$ good primes to
satisfy the conditions of Step~\short\ref{alg:si-whilep}, and from
\ref{thm:interp-primes}, we set $\beta_1=2B_HB_T$ and
$\beta_2=\frac{1}{2}B_NB_T(B_T-1)$. Hence for this subsection we set
$\br=B_T(B_H+B_NB_T)$ so that $\beta_1+\beta_2+\ell\in O(\br)$.

\begin{theorem}
  Suppose $f\in\QQ[x]$ is an unknown polynomial given by a modular black
  box, with bounds $B_T$, $B_H$, $B_N$, and $\br$ given as above.
  The sparse representation of $f$ as in \ref{eqn:ssrep} 
  can be computed with
  \begin{align*}
    O\Big( 
	 & \br\log\br \cdot \M(\br^{1.89}\log^{1.89}\br) \cdot \M(\log \br) \\
	 & + B_N^2 \cdot 
	     \M\big( (B_N+\log B_T) \log(B_N+\log B_T)\big)\Big)
  \end{align*}
  bit operations, plus $O(\bb \br^{2.89} \log^{1.89} \br \M(\log \br))$ 
  bit operations for the black-box evaluations.
\end{theorem}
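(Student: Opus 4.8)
The plan is to walk through \ref{alg:si} step by step, assigning a cost to each and summing, exactly as was done for \ref{alg:ss} in \ref{thm:ss-determ-complex}; most pieces are identical in spirit, so I will emphasize only the two genuinely new components. First I would account for prime generation and black-box work. By \ref{thm:interp-primes} the relevant bounds are $\beta_1 = 2B_HB_T$ and $\beta_2 = \tfrac12 B_NB_T(B_T-1)$, and with $\ell = \max\{2B_H+1, B_N\}$ we get $\beta_1+\beta_2+\ell \in O(\br)$ where $\br = B_T(B_H + B_NB_T)$. So \ref{thm:determ-primes} says Step~\short\ref{alg:si-choosep} costs $O(\br^2 \log^{7+o(1)}\br)$ bit operations, and \ref{thm:P} guarantees each prime $p$ produced satisfies $p \in O(\br^{1.89}\log^{1.89}\br)$. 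The while-loop runs $O(\br)$ times (each good prime advances one of the termination counters, and at most $\beta_1+\beta_2$ wasted primes occur, so $O(\br)$ total iterations); at each iteration Step~\short\ref{alg:si-interp} makes $p$ black-box calls at $\bb$ field operations each, giving the claimed $O(\bb \br^{2.89}\log^{1.89}\br\,\M(\log\br))$ for black-box evaluations, and the dense interpolation of $\fp$ over $\ZZ_p$ costs $O(\M(p)\M(\log p)\log p)$ per iteration, which over $O(\br)$ iterations yields the first summand $O(\br\log\br\cdot\M(\br^{1.89}\log^{1.89}\br)\cdot\M(\log\br))$ after substituting the bound on $p$ — note this already dominates the $O(\br^2\log^{7+o(1)}\br)$ prime-generation cost, so the latter can be absorbed.

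Next I would handle the symmetric-function machinery, which is where the second summand comes from. For each of the $k-1 \in O(B_N + \log B_T)$ good primes, Step~\short\ref{alg:si-computeg} builds $g^{(p_i)} = \prod_j (z - e_j^{(i)}) \bmod (p_i-1)$; this is a product of $t \le B_T$ linear factors modulo a number of size $O(\log\br)$, costing $O(\M(B_T)\log B_T\cdot\M(\log\br))$ each, which is lower order. The Chinese remaindering of Step~\short\ref{alg:si-makeg} reconstructs each of the $t+1$ coefficients of $g$ from $k-1$ residues; since $g$ has integer coefficients bounded in size by $O(B_NB_T)$ (symmetric functions of $t$ integers each below $2^{B_N}$), and the $k-1$ moduli have product of size $O(B_N + \log B_T)\cdot O(\log\br) $ — wait, more carefully: we have enough primes precisely so that $\prod(p_i-1) \ge 2^{B_N}$-ish to pin down the exponents, so the reconstruction is of numbers of size $O(B_N + \log B_T)$, and fast CRT for $O(B_T)$ such coefficients costs $O(B_T \cdot \M(B_N+\log B_T)\log(B_N+\log B_T))$. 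The factorization of $g$ in $\ZZ[x]$ on Step~\short\ref{alg:si-factorg} — since we are told $g$ splits into known-small integer linear factors, we need only find $t$ integer roots each of size $O(B_N)$; this can be done by, e.g., evaluating $g$ modulo a single larger prime or by root-isolation, at cost polynomial in $B_T$ and $B_N+\log B_T$, and one checks it fits inside the claimed $O(B_N^2\cdot\M((B_N+\log B_T)\log(B_N+\log B_T)))$ bound (the factor $B_N^2$ is generous — it comes from, say, $O(B_N)$ root candidates times $O(B_N)$-cost evaluations of an $O(B_T)$-degree polynomial, or from a Loos-style deterministic root-finding routine). Finally, Steps~\short\ref{alg:si-findexp} and the coefficient reconstructions: matching exponents modulo $p_j-1$ is a search over $t$ residues, $O(B_T)$ comparisons of $O(\log\br)$-bit numbers per $(i,j)$ pair, so $O(B_T^2(B_N+\log B_T)\log\br)$ total, lower order; reconstructing each $c_i \in \QQ$ by CRT from $k$ residues followed by rational reconstruction, with $|c_i|$ of size $O(B_H)$, costs $O(B_T\cdot\M(B_H B_T + B_N B_T)\log(\cdots))$ which is $O(\M(\br^{1.89}\cdots))$-dominated. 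Collecting the surviving terms gives exactly the two-line bound in the statement.

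The main obstacle I anticipate is pinning down the cost of Step~\short\ref{alg:si-factorg} and justifying that it really fits under $O(B_N^2\cdot\M((B_N+\log B_T)\log(B_N+\log B_T)))$: general integer-polynomial factorization via LLL would blow this up, so one has to exploit hard that $g$ is already known to be a product of $t$ \emph{integer} linear factors, each root being an exponent $e_i < 2^{B_N}$. The clean argument is that root-finding then reduces to: compute $g$ exactly (we have it from CRT), and for each of the $O(B_T)$ roots either divide out after a cheap search, or — better — recover all roots simultaneously by reducing $g$ modulo one further prime $p'$ with $p' > 2^{B_N+1}$ and reading off the roots in $\ZZ_{p'}$, where integer root-finding in $\ZZ_{p'}[z]$ of an $O(B_T)$-degree polynomial costs $O(\M(B_T)\log B_T\log p' \cdot \M(\log p'))$ by equal-degree-style techniques — and $\log p' \in O(B_N + \log B_T)$, so this is safely inside the stated bound with room to spare (the slack absorbs the deterministic overhead). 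The only other mild subtlety is confirming the loop count is genuinely $O(\br)$ and that the first summand dominates prime generation; both follow the template of \ref{thm:ss-determ-complex} with $\br$ in place of $\bs$. I would write the proof in that style: "The method and correctness follow from the discussion above and from \ref{alg:si}; it remains to total the costs," then itemize per step as sketched.
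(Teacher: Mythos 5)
Your proof follows the same high-level plan as the paper's: total the per-step costs, with the dense interpolations and the integer root-finding as the two surviving terms, prime generation absorbed into the first summand, and everything else (symmetric-function products, Chinese remaindering, exponent matching, coefficient recovery) checked to be lower order. The accounting of the first summand and of the black-box cost matches the paper exactly. A few of your lower-order estimates differ from the paper's (e.g.\ you write $k-1 \in O(B_N+\log B_T)$ where $\ell = \max\{2B_H+1,B_N\}$ actually gives $k\in O(B_H+B_N)$, and your bound on the coefficient sizes in Step~\short\ref{alg:si-makeg} vacillates), but since these terms are dominated they do not affect the conclusion.

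The genuine gap is Step~\short\ref{alg:si-factorg}, which you correctly flag as the delicate point but do not resolve. The paper simply invokes Algorithm~14.17 of \citet{mca} --- a deterministic procedure tailored to finding the integer roots of a polynomial in $\ZZ[z]$ --- and reads off its cost $O\bigl(B_T^2 \cdot \M(B_N+\log B_T) + B_N(B_N+\log B_T)\cdot\M((B_N+\log B_T)\log(B_N+\log B_T))\bigr)$, which collapses to the second summand of the theorem once one notes $\log B_T\le B_N$. Your alternative --- reduce $g$ modulo a single prime $p' > 2^{B_N+1}$ and split into linear factors over $\ZZ_{p'}$ by ``equal-degree-style techniques'' --- relies on Cantor--Zassenhaus, which is randomized; no deterministic polynomial-time root-finding over $\ZZ_{p'}$ for arbitrary $\log p' \in O(B_N)$ is known unconditionally, so this breaks the deterministic nature of the result. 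Your parenthetical appeal to an unspecified ``Loos-style deterministic root-finding routine'' is essentially a gesture toward what Algorithm~14.17 does, but without a citation or cost derivation the second summand is left unjustified. To close the gap you should either cite and analyze a specific deterministic integer-root-finder as the paper does, or explicitly weaken the claim to a randomized one.
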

\begin{proof}
  As in the sparsest-shift computation, the cost of choosing primes in
  Step~\short\ref{alg:si-choosep} requires
  $O(\br^2 \log^{7+o(1)}\br)$ 
  bit operations, and each chosen prime $p_k$ satisfies
  $p_k \in O(\br^{1.89}\log^{1.89}\br)$. The total cost over all
  iterations of Step~\short\ref{alg:si-interp} is also similar to before,
  $O(\br \cdot \M(p_k) \log p_k \cdot \M(\log p_k))$ bit operations,
  plus $O(\bb \br p \M(\log p_k))$ for the black-box evaluations.

  We can compute each $g^{(p_i)}$ in Step~\short\ref{alg:si-computeg}
  using $O(\M(t)\log t)$ ring operations modulo $p_i-1$. Note that
  $k \in O(\ell)$, which is $O(B_H+B_N)$, so the total cost in bit
  operations for all iterations of this step is
  $O((B_H+B_N)\log B_T \cdot \M(B_T)\cdot \M(\log \br))$.

  Step~\short\ref{alg:si-makeg} performs $t$ Chinese Remainderings each
  of $k$ modular images, and the size of each resulting integer is
  bounded by $2^{B_N}$, for a cost of 
  $O(B_T\log B_N\cdot \M(B_N))$ bit operations.

  To factor $g$ in Step~\short\ref{alg:si-factorg}, we can use
  Algorithm~14.17 of \citet{mca}, which has a total cost in bit
  operations of
  \[O\left(B_T^2 \cdot \M(B_N + \log B_T)
    + B_N(B_N+\log B_T) \cdot 
	 \M( (B_N+\log B_T) \log(B_N+\log B_T))\right)\] 
  because the degree of $g$ is $t$,
  $g$ has $t$ distinct roots, and each coefficient is bounded by
  $2^{B_N}$.

  In Step~\short\ref{alg:si-findexp}, we must first compute the modular
  image of $e_i \bmod p_j-1$ and then look through all $t$ exponents
  of $f^{(p_j)}$ to find a match. This is repeated $tk$ times. We can
  use fast modular reduction to compute all the images of each $e_i$
  using $O(\M(B_N)\log B_N)$ bit operations, so the
  total cost is $O(B_T(B_HB_T+B_NB_T+\M(B_N)\log B_N))$
  bit operations.

  Finally, we perform Chinese remaindering and rational reconstruction
  of $t+1$ rational numbers, each of whose size is bounded by $B_H$, for
  a total cost of $O(B_T \cdot \M(B_H) \log B_H)$.

  Therefore we see that the complexity is dominated either by the dense
  interpolation in Step~\short\ref{alg:si-interp} or the root-finding
  algorithm in Step~\short\ref{alg:si-factorg}, depending essentially on
  whether $B_N$ dominates the other bounds.
\end{proof}

Once again, by having only a single bound on the size of the output, the
complexity measures are greatly simplified.

\begin{corollary}
   Given a modular black box for an unknown polynomial 
   $f\in\QQ[x]$ and a bound $B_f$ on the size of its lacunary
   represenation, that representation can be interpolated using
   \[O\left( B_f^{8.67} \log^{4.89}B_f (\llog B_f)^2 \lllog B_f
   \right)\]
   bit operations, plus
   \[O\left( \bb B_f^{8.67} \log^{2.89} B_f \llog B_f \lllog B_f \right)\]
   bit operations for the black-box evaluations.
\end{corollary}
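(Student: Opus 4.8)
The plan is to substitute the single bound $B_f$ for each of $B_T$, $B_H$, and $B_N$ in the preceding theorem, then simplify using $\M(n)\in O(n\log n\llog n)$ and the Cantor--Kaltofen multiplication time. First I would observe that with $B_T=B_H=B_N=B_f$, the quantity $\br=B_T(B_H+B_NB_T)$ becomes $B_f(B_f+B_f^2)\in O(B_f^3)$. This single substitution drives essentially all the asymptotics, since every term in the theorem's complexity bound is a function of $\br$ or of the individual $B$'s, which are all now dominated by powers of $B_f$.

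Next I would process the two summands of the non-black-box cost separately. For the first summand, $\br\log\br\cdot\M(\br^{1.89}\log^{1.89}\br)\cdot\M(\log\br)$: with $\br\in O(B_f^3)$ we get $\br^{1.89}\in O(B_f^{5.67})$, and expanding $\M(m)\in O(m\log m\llog m)$ at $m=\br^{1.89}\log^{1.89}\br$ yields $O(B_f^{5.67}\log^{2.89}B_f\cdot\llog B_f)$ up to lower-order log factors; multiplying by the leading $\br\log\br\in O(B_f^3\log B_f)$ and by $\M(\log\br)\in O(\log B_f\cdot\llog B_f\cdot\lllog B_f)$ collects to $O(B_f^{8.67}\log^{4.89}B_f(\llog B_f)^2\lllog B_f)$. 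For the second summand, $B_N^2\M\big((B_N+\log B_T)\log(B_N+\log B_T)\big)$, note $B_N+\log B_T\in O(B_f)$, so this is $O(B_f^2\cdot\M(B_f\log B_f))\subseteq O(B_f^3\log^2 B_f\,\llog B_f)$, which is dominated by the first summand. Hence the total non-black-box cost is the claimed $O(B_f^{8.67}\log^{4.89}B_f(\llog B_f)^2\lllog B_f)$. For the black-box cost, $\bb\br^{2.89}\log^{1.89}\br\,\M(\log\br)$ with $\br\in O(B_f^3)$ gives $\br^{2.89}\in O(B_f^{8.67})$ and the stated $O(\bb B_f^{8.67}\log^{2.89}B_f\,\llog B_f\,\lllog B_f)$.

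The main obstacle is purely bookkeeping: carefully tracking how the iterated logarithms ($\log$, $\llog$, $\lllog$) accumulate through nested $\M(\cdot)$ applications, since $\M(\M(\cdot))$-style compositions and products of several $\M$ factors each contribute their own $\log\llog$ tail, and one must verify that $\log^{1.89}$ versus $\log^{2.89}$ exponents combine to exactly $\log^{4.89}$ and the $\llog$ power is exactly $2$. I would also need to double-check that using $\br\in O(B_f^3)$ rather than a tighter estimate is what the authors intend (comparing against the analogous $\bs\in O(B_f^2)$ claim in the sparsest-shift corollary, which suggests the same style of crude bounding is expected here). No deep argument is required; the statement follows directly from the theorem once these elementary substitutions and logarithm manipulations are carried out.
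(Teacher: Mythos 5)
Your substitution $B_T=B_H=B_N=B_f$, giving $\br = B_f(B_f+B_f^2)\in O(B_f^3)$, followed by expansion of the nested $\M(\cdot)$ factors via $\M(n)\in O(n\log n\llog n)$, is exactly the route the paper takes (the paper leaves this corollary unproved, but the analogous sparsest-shift corollary is proved by precisely this one-line substitution, and the intended argument here is the same). Your exponent arithmetic checks out: $3\times 1.89 = 5.67$ and $3+5.67=8.67$ for the dominant summand, the $\log$ exponents add as $1+2.89+1=4.89$, the $\llog$ power is $2$, and the black-box side is $3\times 2.89=8.67$ with $\log$ exponents $1.89+1=2.89$; the second summand $O(B_f^3\log^2 B_f\llog B_f)$ is indeed dominated, so the proposal matches the paper.
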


Similar improvements to those discussed at the end of Section
\ref{ssec:sscomplexity} can be obtained under stronger (but unproven)
number theoretic assumptions.

\section{Conclusions and Future Work}
\label{sec:conc}
Here we provide the first algorithm to interpolate an unknown univariate
rational polynomial into the sparsest shifted power basis in time
polynomial in the size of the output. 
The main tool we have introduced is mapping
down modulo small primes where the sparse shift is also mapped
nicely. This technique could be useful for other problems involving
lacunary polynomials as well, although it is not clear how it would
apply in finite domains where there is no notion of ``size''.

There are many further avenues to
consider, the first of which might be multivariate polynomials with a
shift in each variable (see, e.g., \citet{GriLak00}). It would be easy
to adapt our algorithms to this
case provided that the degree in \emph{each variable} is more than twice
the sparsity (this is called a ``very sparse'' shift in
\citet{mark-shifts}). Finding multivariate shifts in the general case
seems more difficult. Even more challenging would be allowing multiple
shifts, for one or more variables --- for example, finding sparse
$g_1,\ldots,g_k\in\QQ[x]$ and 
shifts $\alpha_1,\ldots,\alpha_k\in\QQ$ such that the
unknown polynomial $f(x)$ equals
$g_1(x-\alpha_1)+\cdots+g_k(x-\alpha_k)$.
The most general problem of this type, which we are very far from
solving, might be to compute a minimal-length formula or minimal-size
algebraic circuit for an unknown function. We hope that the small step
taken here might provide some insight towards this ultimate goal.

\section*{Acknowledgement}

The authors would like to thank Igor Shparlinski for pointing out the
paper of \cite{Mikawa:2001}, and for suggesting how to discard
``exceptional'' primes $q$.  This avoids the use of Linnik's theorem,
as employed in \cite{Giesbrecht-Roche:2007}, and improves the
complexity considerably.

The authors would also like to thank Erich Kaltofen for discussions
and sharing of his early unpublished work on rational interpolation,
and \'Eric Schost for discussions and sharing a pre-print of
\cite{Garg-Schost:2009}.

Finally, the authors would like to thank the anonymous reviewers for
their careful readings and useful suggestions.

An extended abstract of a preliminary version of this work appeared at
the MACIS 2007 conference \citep{Giesbrecht-Roche:2007}.


\newcommand{\Gathen}{\relax}

\end{document}